\documentclass[reqno]{article}
\usepackage{amsthm}
\usepackage{multicols}
\usepackage{multicol}
\usepackage{wrapfig}
\usepackage{amsfonts}
\usepackage{amsmath}
\usepackage{graphicx}
\usepackage{amssymb}
\usepackage{mathrsfs}
\usepackage{url}
\usepackage[margin=1in]{geometry}
\theoremstyle{definition}
\newtheorem{theorem}{Theorem}
\newtheorem{lemma}[theorem]{Lemma}

\newtheorem{proposition}[theorem]{Proposition}

\newtheorem{definition}{Definition}
\newtheorem{convention}[definition]{Convention}
\newtheorem*{openproblem}{Open Problem}

\newtheorem*{knightdarwin}{The Knight-Darwin Law}
\def\R{\mathbb{R}}

\begin{document}

\title{Infinite graphs in systematic biology, with an application to the species problem}
\author{Samuel A.~Alexander\thanks{Email:
alexander@math.ohio-state.edu}\\
\emph{Department of Mathematics, the Ohio State University}}
\date{June 2013}

\maketitle

\begin{abstract}
We argue that C.~Darwin and more recently W.~Hennig
worked at times under the simplifying assumption of an eternal biosphere.
So motivated, we explicitly consider the consequences which follow mathematically from this
assumption, and the infinite graphs it leads to.
This assumption admits certain clusters of organisms which have some
ideal theoretical properties of species, shining some light
onto the species problem.
We prove a dualization of a law of T.A.~Knight and C.~Darwin, and
sketch a decomposition result involving the internodons of D.~Kornet, J.~Metz and H.~Schellinx.
A further goal of this paper is to respond to B.~Sturmfels' question, ``Can biology lead to new theorems?''
\end{abstract}

\noindent Keywords:  infinitary systematics, tree of life, Knight-Darwin law, species problem, 
inter\-nodons

\clearpage

\section{Introduction}

Dress et al.~(2010) recently renewed interest in the set of all organisms ever to have lived, 
endowed with a directed graph structure. We thought it a natural extension to consider the set 
$V$ of organisms (or other living things) which have ever lived or which will ever live in the 
future.
We found that this extension had already
been made explicit in Kornet (1993) and 
Kornet et al.~(1995), where the future is needed in order to distinguish between temporary 
and permanent splits in genealogical networks.

\subsection{Darwin as Infinite Graph Theorist}

With our future-oriented view of systematics in mind, we recalled that
particularly relevant passage from 
C.~Darwin's \emph{On the Origin of Species}:

\begin{knightdarwin} (Darwin 1872) ``...it is a general law of nature that no organic being 
self-fertilises itself for a perpetuity of generations; but that a cross with another individual 
is occasionally-- perhaps at long intervals of time-- indispensable.''
\end{knightdarwin}

In (Francis Darwin 1898) we learn that the above principle is called the Knight-Darwin Law.
But in a world where life goes extinct in finite time, 
this law is trivial:
even if all reproduction in the entire world were asexual, the Knight-Darwin Law would 
still trivially hold for lack of an instance of ``perpetual'' (infinite) self-fertilisation to 
falsify it.

\begin{wrapfigure}{r}[.5in]{0.5\textwidth}
\vspace{-10pt}
\begin{center}
\includegraphics[scale=.7]{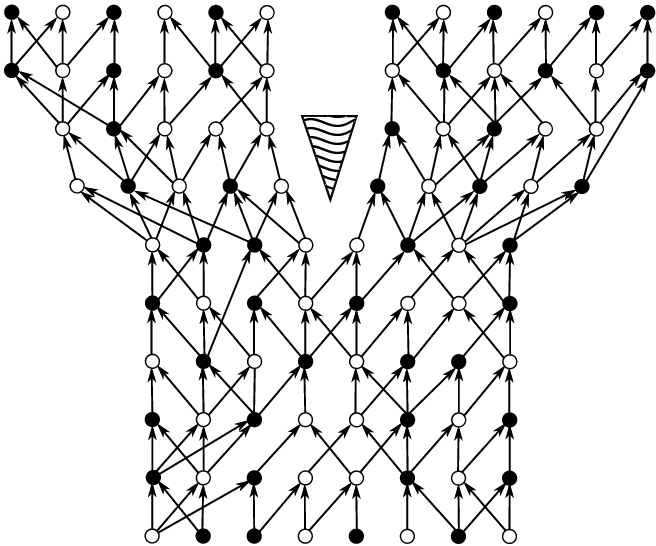}
\end{center}
\vspace{-10pt}
\begin{quote}
\caption{A reproduction of a graph in W.~Hennig's (1966) book (Hennig's 
Figure 4, p.~19).
If we understand correctly, the triangle denotes the permanency of the
corresponding cleavage; this triangle would be
superfluous if Hennig did not intend
us to imagine the graph continuing beyond what is shown.}
\end{quote}
\vspace{-.9in}
\end{wrapfigure}
Thus not only is Darwin concerned with the distant future of life, he seems to admit 
the 
possibility of an infinite biosphere-- the possibility that life will never go extinct-- 
because if he explicitly thought this impossible, then
there would be no point suggesting the above Law.

Based on a careful reading of surrounding passages, we believe the Knight-Darwin Law can be 
glossed in modern graph-theoretical language as follows (hereafter, $G$ is the graph of all 
living organisms, past and future, an arc directed from $u$ to $v$ precisely if $u$ is a parent 
of $v$):
\begin{knightdarwin} \emph{(Graph-theoretical version)} The graph $G$ does not contain an
infinite directed path of vertices each of which has $<2$ parents.
\end{knightdarwin}

Darwin did know of apparent counterexamples (see Francis Darwin 
1898), reducing the Knight-Darwin Law to a simplifying 
assumption; nevertheless, a simplifying assumption of remarkable
graph-theoretical sophistication for its time.
One might argue 
that infinite graphs were not well studied outside of cutting-edge research mathematics
until (K\H{o}nig 1936), many decades after 
Darwin's work.

In a later section we will prove that if true, the Knight-Darwin Law 
logically implies a Dual Knight-Darwin Law.

\subsection{Hennig as Infinite Graph Theorist}

There is compelling reason to suspect the Figure 4 on p.~19 of W.~Hennig's 
(1966) book
(see our Figure 1) implicitly depicts an infinite graph--
joining Hennig with Darwin in infinite graph theory.

The figure in question illustrates a \emph{cleavage} in a biological network.
Hennig refers (in his caption) to ``the \emph{process} of species cleavage'' (emph.~mine),
suggesting a dynamic continuation of the genealogical network (graph) splitting up.
Thus (we believe) Hennig must intend the population to continue beyond
what is shown (what is shown is too ephemeral to 
be a \emph{process}).
A finite continuation would have the same problem, thus
suggesting an infinite intended extension.
In the text accompanying the figure, Hennig speaks of
species which ``persist over long periods of time'' but which are ``not absolutely permanent''.
Like the Knight-Darwin Law, there would be no reason to even mention absolute permanence
if Hennig assumed a finite end to the biosphere.

Furthermore, Hennig seems to distinguish between one particular cleavage
(indicated by a triangle) and smaller cleavages also present in the graph.
Surely the distinction cannot be as arbitrary\footnote{Baum \& Shaw (1995, p.\ 294)
and Velasco (2008, p.\ 868) point out the problem of \emph{big} cleavages and \emph{small} cleavages, and the
resulting vagueness.  This vagueness disappears under infinitary assumptions.}
as ``branch size 26'' versus ``branch size 1''.
Hennig discounted the minor cleavages because he meant us to imagine them as temporary--
they would disappear if we just saw a little more of the graph.
But any finite number of additional generations would still suffer minor cleavages\footnote{Unless
something unnatural occurred, for example, all the living specimens of each species joining to co-parent a
single sterile child.}
and the only way to eliminate them is to continue the graph infinitely far.



\subsection{An Application to the Species Problem}

The \emph{species problem} is the problem of
finding a proper definition for the intuitive concept of a 
biological species.
Many species notions exist, each with its own pros and cons, and none have
managed to reach universal acceptance.

By studying $G$ with an explicit infinitude assumption, we have arrived at some cluster
notions (described in Sections 3 and 4) which we have
decided to call \emph{infinitary genera} and \emph{infinitary species}.
These name choices will be justified below; \emph{infinitary genus} is more or less an arbitrary name
and could be replaced by \emph{infinitary family} or \emph{infinitary order} or
\emph{infinitary tree-of-life-node}; however, the name \emph{infinitary species} is important.

We won't pretend that our species notion is like any species notion used by everyday
biologists; it would be useless in the field, and we would never dream of suggesting
it as a replacement for the practical species notions.
However, we hope our notion will apply to the species problem
in three ways: by offering a solution to 
a
species sorites paradox;
by theoretically reconciling competing morphological and
non-morphological approaches to
species definition;
and by telling us something about the structure of the far future 
extremities of the tree of life.

\section{Some Further Justification for Infinitary Assumptions}

\begin{quote}
``So profound is our ignorance, and so high our presumption,
that we marvel when we hear of the extinction of
an organic being; and as we do not see the cause,
we invoke
cataclysms
to desolate the world, or invent laws on the duration of the forms of life!''  --Charles Darwin (1872)
\end{quote}

In assuming that infinitely many individuals will live, we are hardly the first scientists to
approximate the finite by the infinite.  The assumption is very similar
to how physicists and chemists
study symmetry groups of crystal patterns or tilings.  We assume such patterns fill space to infinity,
because otherwise it would be very unnatural, if possible at all, to sensibly talk about their translation
symmetries--  we wonder whether this might somewhat explain, by analogy, why species are so difficult
to define.

\subsection{Justification by Pursuit of New Mathematics}

By assuming infinity we will obtain
somewhat unique responses to Sturmfels' (2005) question, ``Can biology lead to new
theorems?''  If biology is to be ``mathematics' next physics, only better'' (Cohen 2004)
it must generously contribute to the combinatorial and infininitary branches of mathematics,
as physics has done.

One way to distinguish mathematical theorems is on the strength of the set-theoretical
assumptions which they require.  To be sure, biology has already contributed much to
mathematics, but the author is unaware of any theorems from biology which hinge on the \emph{Axiom
of Choice}; we will exhibit such a theorem in Section 4 
(Theorem~\ref{maintheorem}).



\subsection{Reduced Dependence on Scale}

There seems to be some disagreement on whether the systematist ought
to focus on individual living organisms, or whether to zoom out and 
consider larger populations as atoms; see, for example, the back-and-forth 
between de Queiroz \& Donoghue (1988) (pro-individual organisms) and Nixon 
\& Wheeler (1990) (the opposite).

If the biosphere is finite, this scaling decision has a big effect on 
the 
shape of the biosphere.
If one systematist takes organisms as 
atoms, and another takes some populations approximating 
species, the sizes of the resulting biospheres differ quite a bit.

On the other hand, if both systematists
operate under the assumption of an infinite tree of life,
their decisions do not effect the shape of the biopshere: viewed through 
either lens, the biosphere is infinite.

Note that this scaling decision can go both ways.  Rather than considering 
individual organisms as the vertices of our graph, we could zoom 
\emph{inward} and consider (say) individual X-chromosomes as vertices.  In a 
finite world, this decision would massively blow up our graph $G$, but 
under infinitary assumptions, it makes no major difference.

\subsection{A word to the most hardcore finitist}

I hope my paper will be useful to you, too.
In later sections there are results (e.g.~Proposition 6) which are,
I think, counter-intuitive enough that you might
be able to get away with calling them paradoxes, and using them to advance the finitist argument
by \emph{reductio ad absurdum},
in the same way a critic of the Axiom of Choice might use the Banach-Tarski paradox.
In other words, feel free to read the paper as a set of theorems specifically intended
to \emph{disprove} infinitary assumptions.

\section{Infinitary Genera}

In this and the next two sections, we will exhibit some interesting 
cluster notions which arise from the infinite biosphere assumption
along with some additional assumptions which we consider less scandalous.
The precise assumptions we make are as follows.

\begin{itemize}
\item (A1) We assume (like Dress et al.~(2010)) there are only finitely 
many 
\emph{roots}, that is,
parentless individuals.
\item (A2) We assume no individual is a parent of infinitely many 
children.
\item (A3) We assume each vertex $v\in V$ has a \emph{birthdate} 
$t(v)\in\R$, that $t(u)<t(v)$ whenever $u$ is a parent of $v$,
and that for every real $x$, $\{v\in V\,:\,t(v)<x\}$ (the set of 
individuals born before time $x$) is finite.
\item (A4) We assume $G$ is infinite, that is, infinitely many individuals 
will live.
\end{itemize}

We will define an infinitary genus to be an infinite set of individuals 
which is \emph{closed under ancestry}, that is, which contains every 
ancestor of every member of itself.  But it would be rash to make this 
definition without first motivating it.
In mathematics, it is a sign of a notion's importance when it arises 
unexpectedly from seemingly-unrelated
competing notions.  We will arrive at our infinitary genus notion in 
precisely this way.

To be clear, we use the word \emph{genus} arbitrarily where any of 
\emph{family}, \emph{order}, or \emph{tree-of-life-node}
would work just as well.
Mathematical language
is great for speaking about things which are maximal (e.g.~the entire 
biosphere) and things which are minimal
(e.g. infinitary species, as we'll see in the next section) but not at 
distinguishing between different intermediate
levels (e.g.~genera vs.~families).

With the above paragraphs in mind, let us make the following attempt at 
defining a group of individuals.
We will define what we call a \emph{birthdate genus}.
Our hope is that this is particularly obvious, something the reader
could easily invent on their own.
We will then prove it to be equivalent to our desired infinitary genus 
notion, in sight of Assumptions A1--A4.
Thus the following definition will not itself feature prominently in the 
rest of the paper, its purpose
is to motivate a definition to come after.

\begin{definition}
\label{basicspecies}
By a \emph{birthdate genus}
I mean a set $S$ of individuals such that there is some time $t$ such that 
every member of $S$ has
birthdate $\geq t$ and every external ancestor (that is, every individual 
outside $S$ but with a descendant
in $S$) has birthdate $<t$.
\end{definition}

As a clustering notion, this
is a weakened version of the much stronger notion which Dress et al.~call 
the \emph{Apresjan cluster},
following Steel (2007).
It captures the intuition that a node in the Tree of Life
must have evolved into existence at some particular time, and thus,
all its members are at least that young, while every external ancestor is 
strictly older.

\begin{convention}
\label{conv1}
If $X$ and $Y$ are infinite sets, we say $X$ is \emph{almost equal to} $Y$ 
(and write $X\approx Y$)
if their symmetric difference $(X-Y)\cup (Y-X)$ is finite.
\end{convention}

The whole reason to define birthdate genera was to allow the
following motivating result:

\begin{proposition}
\label{motivate}
If $S$ is an infinite set of individuals, then the following two 
conditions are equivalent:
\begin{enumerate}
\item $S$ is almost a birthdate genus.
\item $S$ is almost its own ancestral closure.
\end{enumerate}
\end{proposition}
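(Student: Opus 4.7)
The plan is to prove both implications by shuffling finite errors between $S$, its ancestral closure $\overline{S}$, and an auxiliary birthdate genus, leveraging Assumption A3 in two useful forms: (i) $\{v : t(v)<x\}$ is finite for every real $x$; and (ii) since the parent relation strictly increases $t$, every vertex has only finitely many ancestors, so every finite set of vertices has a finite ancestral closure.

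For the direction $(2)\Rightarrow(1)$, I would set $F := \overline{S}\setminus S$, which is finite by hypothesis. Then pick $t^{\ast}$ strictly greater than $t(u)$ for every $u\in F$ (possible since $F$ is finite), and define
\[ T := \{v \in S : t(v) \geq t^{\ast}\}. \]
By (i), $S\setminus T$ is finite, so $T$ is infinite and $T\approx S$. Every member of $T$ has birthdate $\geq t^{\ast}$ by construction. Any external ancestor of $T$ is either in $F$ or in $S\setminus T$; in either case it has birthdate strictly less than $t^{\ast}$. Hence $T$ is a birthdate genus with threshold $t^{\ast}$, and $S$ is almost a birthdate genus.

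For the direction $(1)\Rightarrow(2)$, I would start from an infinite birthdate genus $T$ with threshold $t$ satisfying $S\approx T$. The set $\overline{T}\setminus T$ of external ancestors of $T$ lies inside $\{v : t(v)<t\}$, finite by (i), so $T\approx \overline{T}$. Writing $S=(S\cap T)\cup(S\setminus T)$ and $T=(S\cap T)\cup(T\setminus S)$ with the two discrepancy sets finite, (ii) makes their ancestral closures finite, so $\overline{S}$ and $\overline{T}$ each differ from $\overline{S\cap T}$ by a finite set, giving $\overline{S}\approx\overline{T}$. Chaining $S\approx T\approx\overline{T}\approx\overline{S}$ yields $S\approx\overline{S}$.

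I do not anticipate a serious obstacle: the argument is essentially finite-symmetric-difference bookkeeping, and the only real ingredient is the double use of A3. The one subtlety is that Convention~\ref{conv1} only defines $\approx$ between infinite sets, so one must verify the auxiliary sets $T$ and $\overline{S}$ produced in each direction are infinite — but in $(2)\Rightarrow(1)$ only finitely many members of the infinite set $S$ have birthdate below $t^{\ast}$, so $T$ is infinite, and in $(1)\Rightarrow(2)$ we have $\overline{S}\supseteq S$.
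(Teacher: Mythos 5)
Your proof is correct and uses essentially the same ingredients as the paper's: both directions rest on the two consequences of A3 (finitely many individuals born before any fixed time, hence finite ancestral closures of finite sets), and your $(2\Rightarrow1)$ construction of $T$ via a cutoff $t^{\ast}$ above the birthdates of $\overline{S}\setminus S$ mirrors the paper's $S-S_0$. The only difference is organizational: where the paper proves $(1\Rightarrow2)$ by a direct three-case analysis of elements of $\overline{S}-S$, you factor the same three finite sets into a chain $S\approx T\approx\overline{T}\approx\overline{S}$ using the (correct, and worth stating) observation that ancestral closure preserves $\approx$.
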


\begin{proof}
Let $\bar{S}$ be the ancestral closure of $S$.

$(1\Rightarrow 2)$ Assume $S$ is almost a birthdate genus.
Thus, there is a birthdate genus $S'$ such that $S\approx S'$.
There is some time $t$ such that
all members of $S'$ are born at time $\geq t$ and all external ancestors 
of $S'$ are born at time $<t$.

Now, we claim that $\bar{S}-S$ is finite.
Suppose $u\in \bar{S}-S$.
Since $u\in\bar{S}$, $u$ is an ancestor of an individual $v\in S$.
There are three cases:
\begin{itemize}
\item Case 1: $u\in S'-S$.  Since $S'-S$ is finite, this can only occur 
for finitely many $u$.
\item Case 2: $u\not\in S'-S$ and $v\in S'$.
Then $u$ is an external ancestor of $S'$, so is born before time $t$.  By 
Assumption A3, this can
only occur for finitely many $u$.
\item Case 3: $u\not\in S'-S$ and $v\not\in S'$.
Since $v\in S$, $v\in S-S'$.
Since $S-S'$ is finite, there are only finitely many possibilities for 
$v$, and Assumption A3
implies that those finitely many possibilities only have finitely many 
ancestors, so again,
Case 3 can only occur for finitely many $u$.
\end{itemize}
This shows $\bar{S}-S$ is finite.  Since $S-\bar{S}=\emptyset$ is also 
finite, $S\approx \bar{S}$ as desired.

$(2\Rightarrow 1)$
Conversely, assume $S\approx \bar{S}$.
Thus, there are only finitely many external ancestors of $S$, call them 
$S_1$.
Of these finitely many individuals, let $t$ be the latest birthdate which 
occurs.
By Assumption A3, only finitely many individuals were ever born as of time 
$t$, call them $S_0$.
We claim $S-S_0$ is a birthdate genus.
Certainly any member of $S-S_0$ is born after time $t$, by definition of 
$S_0$.
Now suppose $v\not\in S-S_0$ has a descendant in $S-S_0$.  It could be 
$v\in S_0$,
in which case $v$'s birth date is no later than $t$, by definition of 
$S_0$.
Otherwise, $v$ must be outside of $S$, and thus, having a descendant in 
$S$,
$v$ is in $S_1$, so that (by choice of $t$), again $v$ is born no later 
than $t$.
\end{proof}

If we ignore finite sets as being insignificant, and treat infinite 
birthdate genera as being identical to sets which are \emph{almost}
infinite birthdate genera, we arrive (via Proposition~\ref{motivate}) at a 
much simpler definition,
and one which is easier to work with,
at the price of an error which is
infinitesimal.

\begin{definition}
\label{def1}
An \emph{infinitary genus} is an infinite, ancestrally closed set of 
individuals.
\end{definition}

Right away we would like to demonstrate how closely interwoven 
Definition~\ref{def1} is with
the point-set topology of cladistics: further evidence of the legitimacy 
of the
definition and its worthiness of study.

\begin{proposition}
\label{propclade}
An infinite set $S$ is an infinitary genus if and only if it is a closed 
set in the topology on $V$ where
clades are the basic open sets (by a \emph{clade} of course we mean an 
individual and
its descendants).
\end{proposition}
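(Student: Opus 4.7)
The plan is to show that the topological condition ``$V\setminus S$ is a union of clades'' is literally a rewording of ``$S$ is ancestrally closed,'' so the equivalence follows by unwinding definitions. The infinitude of $S$ plays no role in either direction; it is built into the definition of infinitary genus and is a hypothesis on $S$, so it can be ignored.

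First I would record the convenient reformulation: for any $u\in V$, the clade $C(u)$ consists of $u$ together with its descendants, so $v\in C(u)$ iff $u=v$ or $u$ is an ancestor of $v$. The open sets of the topology are exactly the unions of clades, so a set $U\subseteq V$ is open iff for every $u\in U$ there is some $w$ with $u\in C(w)\subseteq U$. In particular, $U$ is open iff for every $u\in U$ we have $C(u)\subseteq U$ (since $u\in C(u)$, one can always take $w=u$; conversely $C(w)\subseteq U$ with $u\in C(w)$ forces $C(u)\subseteq C(w)\subseteq U$ by transitivity of descent).

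Next I would prove the forward direction. Assume $S$ is an infinitary genus, and let $u\in V\setminus S$. If some $v\in C(u)$ lay in $S$, then either $v=u\in S$, contradicting $u\notin S$, or $u$ is an ancestor of $v\in S$, so ancestral closure puts $u\in S$, again a contradiction. Hence $C(u)\subseteq V\setminus S$ for every $u\in V\setminus S$, which by the reformulation above means $V\setminus S$ is open, i.e.\ $S$ is closed.

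Finally I would handle the converse. Assume $S$ is closed and infinite; we want ancestral closure. Let $v\in S$ and suppose $u$ is an ancestor of $v$. If $u\notin S$, then $u\in V\setminus S$, which is open, so $C(u)\subseteq V\setminus S$ by the reformulation. But $u$ is an ancestor of $v$ means $v\in C(u)$, giving $v\in V\setminus S$, contradicting $v\in S$. Hence $u\in S$, and $S$ is ancestrally closed. There is no real obstacle here; the only thing to be careful about is the bookkeeping convention that $u\in C(u)$, which ensures the two versions of ``open = union of clades'' line up cleanly.
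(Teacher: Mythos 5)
Your proposal is correct and follows essentially the same route as the paper: both directions reduce the topological condition to ``the complement is a union of clades'' and then unwind ancestral closure, with the paper writing $S^c=\bigcup_{v\in S^c}C_v$ where you instead record the equivalent criterion that $U$ is open iff $C(u)\subseteq U$ for all $u\in U$. The only difference is cosmetic: you make explicit the transitivity-of-descent bookkeeping that the paper leaves implicit in its converse direction.
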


\begin{proof}
In other words, we must show that for infinite $S$, $S$ is an infinitary 
genus if and only if its complement $S^c$ is a union
of clades.

$(\Rightarrow)$
Assume $S$ is an infinitary genus.
Let $v\in S^c$.  Since $S$ is ancestrally closed, $v$ cannot be an 
ancestor of any element of $S$.
Thus, the clade $C_v$, consisting of $v$ and all its descendants, is 
disjoint from $S$.
Therefore $S^c$ is the union
\[S^c=\bigcup_{v\in S^c} C_v\]
of clades.

$(\Leftarrow)$
Assume $S^c$ is a union of clades and let $v\in S$ have an ancestor $u$; 
we must show $u\in S$.
If $u$ were not in $S$, $u$ would be in $S^c$, hence in some clade 
entirely contained in $S^c$.
But $v$ would be in that clade, forcing $v\in S^c$, a contradiction.
\end{proof}


\section{Infinitary Species}

The motivation for our species notion is the conviction that species 
should be the smallest taxa--
below genera, families, orders and so on.
Thus, we will define an infinitary species to be an infinitary genus in 
which no proper subset is an infinitary genus.

But why does such an obvious-seeming approach work for us when it has not 
worked for the finitist?
If a set of organisms constitutes a species, then presumably it would 
still constitute a species if one specimen were
eliminated.  According to our minimalist approach, if we could throw that 
specimen out of the species and still have
a species, then we must do so.  In a finite world, this repeated act of
discarding specimens would render the species empty.  This is an instance 
of the sorites paradox, or paradox of the heap
(Hyde 2011).
However, there is a way out of the trap:  if we require our species to be 
infinite and ancestrally closed,
the premise that we can always discard one specimen and retain a species 
gains a caveat:
we cannot discard the specimen if it has so many descendants that doing so 
would render the species finite.
We shall see that this causes the sorites paradox to vanish.

\begin{wrapfigure}{r}[.1in]{0.5\textwidth}
\vspace{-15pt}
\begin{center}
\includegraphics[scale=.9]{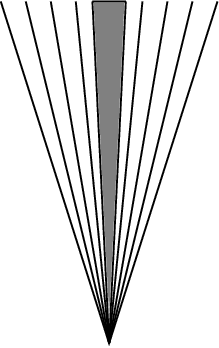}
\end{center}
\vspace{-10pt}
\begin{quote}
Figure 2: A species (shaded) nested within a larger genus, within a larger 
order, within a larger family, and so on.
\end{quote}
\end{wrapfigure}

\begin{definition}
An \emph{infinitary species} (or more simply an \emph{inspecies}) is an 
infinitary genus in which no
strictly smaller infinitary genus is a subset.
\end{definition}

A priori, there might be no inspecies-- either because of the sorites 
paradox, or because
infinitary genera can be refined
further and further with no end.
The main technical theorem of this paper is:

\begin{theorem}
\label{maintheorem}
Given assumptions A1--A4, there is at least one inspecies.
\end{theorem}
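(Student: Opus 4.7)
The plan is to apply Zorn's Lemma to the collection $\mathcal{F}$ of infinitary genera ordered by reverse inclusion, so that a maximal element in that order is exactly a minimal infinitary genus, i.e.\ an inspecies. The only hypothesis requiring work is that every chain $\mathcal{C}\subseteq\mathcal{F}$ has a lower bound under $\subseteq$; I propose the intersection $I=\bigcap_{S\in\mathcal{C}}S$. Ancestral closure of $I$ is immediate from ancestral closure of each $S\in\mathcal{C}$, so the real content is to show that $I$ is still infinite.

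I first establish an auxiliary lemma: every infinitary genus $S$ contains an infinite descending path $r=v_0\to v_1\to v_2\to\cdots$ issuing from one of its roots. By ancestral closure every vertex of $S$ has some root of $V$ as an ancestor inside $S$, so by (A1) and the pigeonhole principle some root $r\in S$ has infinitely many descendants in $S$. Because $S$ is ancestrally closed, any descent path in $V$ from $r$ to a vertex of $S$ lies entirely in $S$, so by (A2) the tree of finite descent paths from $r$ that stay inside $S$ is finitely branching; if its depth were bounded, that tree would be finite, contradicting the infinitude of the descendants of $r$ in $S$. K\H{o}nig's Lemma then supplies an infinite branch.

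The main step is a compactness argument. Let $T$ be the forest of all finite descent paths in $V$ beginning at some root of $V$; by (A1)--(A2), $T$ has finitely many components and is finitely branching, and its space $[T]$ of infinite branches is compact in the standard end-topology. For each $S\in\mathcal{F}$ let $[T_S]\subseteq[T]$ consist of those infinite descent paths whose vertices all lie in $S$; the lemma gives $[T_S]\neq\emptyset$, and $[T_S]$ is closed in $[T]$ since any branch outside $[T_S]$ has an initial segment meeting $V\setminus S$, and hence an open neighbourhood of branches sharing that initial segment disjoint from $[T_S]$. Now $\{[T_S]\,:\,S\in\mathcal{C}\}$ is a chain of non-empty closed subsets of the compact space $[T]$, hence has the finite intersection property, so compactness yields a common path $P\in\bigcap_{S\in\mathcal{C}}[T_S]$. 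The vertices of $P$ lie in every $S\in\mathcal{C}$ and therefore in $I$, exhibiting infinitely many elements of $I$.

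The main obstacle is exactly this last step, and it is why the result genuinely requires assumptions beyond (A4): for a general nested family of infinite ancestrally closed sets the intersection can easily collapse to something finite, so the proof must use (A1)--(A3) to recast the chain of genera as a nested family of non-empty compact path-spaces and extract a common infinite end. With $I\in\mathcal{F}$ now a lower bound for $\mathcal{C}$, Zorn's Lemma produces a minimal infinitary genus, i.e.\ an inspecies.
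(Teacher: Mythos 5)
Your proof is correct, and its overall skeleton (Zorn's Lemma on the infinitary genera under reverse inclusion, with the intersection of a chain as the candidate bound, and ancestral closure being the easy half) coincides with the paper's. Where you genuinely diverge is in the only hard step, showing the intersection $I$ of a chain is infinite. The paper argues by contradiction and pure finite combinatorics: assuming $I$ finite, it collects the finitely many ``boundary'' individuals $q_1,\ldots,q_m$ having a parent in $I$ but lying outside $I$ (finitely many by A2), finds a single $Z_i$ in the chain excluding all of them, and then derives a contradiction by choosing $r\in Z_i\setminus I$ at minimal distance from a root (after adjoining a super-root so that $I$ provably contains all roots). You instead prove a positive structural lemma --- every infinitary genus contains an infinite descent path from a root, via A1, A3, pigeonhole and K\H{o}nig's Lemma --- and then run a compactness argument on the branch space $[T]$ of the finitely-branching, finitely-rooted forest of descent paths, using the finite intersection property of the chain of nonempty closed sets $[T_S]$. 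Your route is less elementary but buys something real: it shows the intersection contains an entire infinite lineage (a strictly stronger conclusion than mere infinitude, since birthdates strictly increase along arcs by A3 and so the path's vertices are distinct), it works verbatim for any downward-directed family rather than just chains, and it makes the roles of A1 and A2 (compactness of $[T]$) and of A4 (nonemptiness of each $[T_S]$) transparent. Two cosmetic points: your statement that ``every vertex of $S$ has some root as an ancestor inside $S$'' should except the roots themselves and should explicitly invoke A3 to terminate ancestor chains, and you should say a word about why an infinite directed path has infinitely many distinct vertices (acyclicity from A3); neither affects correctness.
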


The proof will use the Axiom of Choice, in its Zorn's Lemma (Zorn 1935, 
Gowers 2008) form\footnote{By A3,
the biosphere is countable; it can be shown that only the \emph{countable} 
axiom of choice is needed
to guarantee existence of inspecies.}.
We will briefly review Zorn's Lemma before proving the theorem
(we state the lemma in a form most suitable for the use to which we will 
apply it).
A reader willing to take our word for it can safely skip the proof,
but bear in mind that if any one of A1, A2, A3, or A4 were denied, the 
other three
would not imply the theorem.

Recall that a binary relation $\supseteq$ on a set $\mathscr{X}$ is a 
\emph{partial order}
if it is reflexive, transitive, and anti-symmetric (anti-symmetry means 
that if $X\supseteq Y$ and $Y\supseteq X$
then $X=Y$).
A family $(Z_i)_{i\in I}$ of elements of $\mathscr{X}$, indexed by a 
linear order $I$, is a \emph{$\supseteq$-chain} if $Z_i\supseteq Z_j$
whenever $i\leq j$.
A \emph{bound} for this chain is an element $Z\in\mathscr{X}$ such that 
$Z_i\supseteq Z$ for every $i\in I$.
An element $X\in\mathscr{X}$ is \emph{extremal} if the only 
$Y\in\mathscr{X}$ such that $X\supseteq Y$ is $Y=X$ itself.

\begin{theorem}
\emph{(Zorn's Lemma)}
Let $\supseteq$ be a partial order on a nonempty set $\mathscr{X}$.
Suppose that every nonempty $\supseteq$-chain has a bound.
Then $\mathscr{X}$ has an extremal element.
\end{theorem}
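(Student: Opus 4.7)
The plan is to prove Zorn's Lemma by contradiction, using the Axiom of Choice to manufacture a transfinite, strictly $\supseteq$-descending family in $\mathscr{X}$ indexed by every ordinal, which would yield an injection from the proper class of all ordinals into the set $\mathscr{X}$ --- an impossibility.

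Suppose for contradiction that $\mathscr{X}$ has no extremal element. Then for every $X\in\mathscr{X}$ the set $\{Y\in\mathscr{X}:X\supseteq Y\text{ and }Y\neq X\}$ is nonempty, so the Axiom of Choice supplies a ``shrinking'' function $f:\mathscr{X}\to\mathscr{X}$ with $X\supseteq f(X)$ and $f(X)\neq X$ for every $X$. A second application of choice picks, for each $\supseteq$-chain $\mathscr{C}\subseteq\mathscr{X}$, a bound $b(\mathscr{C})\in\mathscr{X}$; the chain-bound hypothesis is precisely what permits this selection.

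Using $f$ and $b$, I would define a family $(X_\alpha)_{\alpha\in\mathrm{Ord}}$ by transfinite recursion on the ordinals. Fix $X_0\in\mathscr{X}$ arbitrarily. At a successor stage, set $X_{\alpha+1}=f(X_\alpha)$, which lies strictly below $X_\alpha$. At a limit ordinal $\lambda$, the inductive hypothesis yields that $(X_\beta)_{\beta<\lambda}$ is a $\supseteq$-chain, so set $X_\lambda=f\bigl(b((X_\beta)_{\beta<\lambda})\bigr)$; since $X_\beta\supseteq b((X_\beta)_{\beta<\lambda})\supsetneq X_\lambda$ for every $\beta<\lambda$, anti-symmetry forces $X_\beta\neq X_\lambda$. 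A routine transfinite induction then confirms $X_\alpha\supsetneq X_\beta$ whenever $\alpha<\beta$, so $\alpha\mapsto X_\alpha$ is an injection from $\mathrm{Ord}$ into the set $\mathscr{X}$, contradicting Hartogs' theorem (equivalently, the Burali-Forti argument applied to the image).

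The main obstacle is making the transfinite recursion rigorous, since the limit clause consults the entire prior segment of the family rather than a single predecessor; care is needed with the recursion theorem on the class of ordinals and with the Axiom of Replacement. A cleaner equivalent route that sidesteps this bookkeeping is to invoke the Hausdorff Maximal Principle (itself a standard consequence of AC) to produce a maximal $\supseteq$-chain $\mathscr{M}\subseteq\mathscr{X}$, bound it by hypothesis to obtain some $Z\in\mathscr{X}$, and observe that $Z$ must be extremal --- otherwise $f(Z)$ would be a strict refinement of $Z$ extending $\mathscr{M}$, contradicting its maximality.
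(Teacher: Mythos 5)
The paper does not prove this statement at all: Zorn's Lemma is quoted as a known result (with citations to Zorn 1935 and Gowers 2008) and immediately applied to the poset of infinitary genera, so there is no in-paper argument to compare yours against. Your proposal is a correct, standard proof of the lemma in the paper's (dualized) formulation, and in fact you give two valid routes. The transfinite-recursion route is sound: the shrinking function $f$ exists by AC exactly because no element is extremal; at a limit $\lambda>0$ the prior segment $(X_\beta)_{\beta<\lambda}$ is a \emph{nonempty} chain (it contains $X_0$), so the paper's hypothesis --- which only bounds nonempty chains --- applies; and your antisymmetry argument correctly rules out $X_\beta=X_\lambda$, since that would force $b((X_\beta)_{\beta<\lambda})=X_\lambda=f(b(\cdots))$. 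The resulting strictly $\supseteq$-descending $\mathrm{Ord}$-indexed family injects the ordinals into the set $\mathscr{X}$, which is the desired contradiction; the bookkeeping you flag (defining the recursion so the limit clause is total, then proving the chain property by induction, with Replacement justifying the Hartogs/Burali-Forti step) is genuine but routine. The Hausdorff Maximal Principle route is also correct and cleaner, with one small point worth making explicit: if $Z$ bounds a maximal chain $\mathscr{M}$ and $f(Z)\neq Z$ with $Z\supseteq f(Z)$, you need $f(Z)\notin\mathscr{M}$ before $\mathscr{M}\cup\{f(Z)\}$ properly extends $\mathscr{M}$; this follows by antisymmetry, since any member of $\mathscr{M}$ is $\supseteq Z$. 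Either argument would serve as a self-contained proof where the paper simply defers to the literature.
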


Armed with this mathematical logical sledgehammer, we can prove 
Theorem~\ref{maintheorem}.

\begin{proof}[Proof of Theorem~\ref{maintheorem}]
We may assume the following $(*)$:
that every non-root individual is a descendant
of \emph{every} root.  This is safe to assume because if it is untrue, we 
can (thanks to A1) add an imaginary ``super root'' to
our graph and declare it to be the lone parent of all actual roots.

Let $\mathscr{X}$ be the set of all infinitary genera.  Notice 
$\mathscr{X}\not=\emptyset$
since it contains $V$ itself: the set of \emph{all} individuals is an
infinitary genus by Assumption A4.
Notice that the superset relation $\supseteq$ partially orders 
$\mathscr{X}$, and an infinitary genus $X\in\mathscr{X}$
is an inspecies precisely if it is extremal with respect to $\supseteq$.
Therefore, by Zorn's Lemma, it is sufficient to let $(Z_i)_{i\in I}$ be an 
arbitrary nonempty $\supseteq$-chain from $\mathscr{X}$
and show it has a bound in $\mathscr{X}$.  We will show that $Z=\cap_{i\in 
I} Z_i$ is such a bound.
Obviously for every $j\in I$, $Z_j\supseteq \cap_{i\in I} Z_i$, so all 
that remains is to show
$\cap_{i\in I} Z_i$ is in $\mathscr{X}$ (i.e., that it is an infinitary 
genus).

First the easy part: we show that $\cap_{i\in I} Z_i$ is ancestrally 
closed.
Let $v\in \cap_{i\in I} Z_i$ and let $u$ be an ancestor of $v$.
For every $i\in I$, we have $v\in Z_i$, and $Z_i$ is ancestrally closed, 
so $u\in Z_i$;
by arbitrariness of $i$, this shows $u\in \cap_{i\in I} Z_i$, establishing 
ancestral closure of $Z$.

The difficult part is to show that $Z$ is infinite.
Assume, for sake of contradiction, that $Z$ is finite.

Let $q_1,\ldots,q_m$ be those individuals who have a parent 
in $Z$ but
who are not in $Z$ themselves: there are finitely many of these by
Assumption A2.
For any $1\leq k\leq m$, the fact that $q_k\not\in Z=\cap_{i\in I}Z_i$ 
means there is
some $i_k\in I$ such that $q_i\not\in Z_{i_k}$, and thus (since 
$(Z_i)_{i\in I}$ is a chain)
more strongly $q_k\not\in Z_{i}$ whenever $i\geq i_k$.
Therefore, letting $i=\max\{i_1,\ldots,i_m\}$,
we have: $q_1\not\in Z_i$, $q_2\not\in Z_i$, $\ldots$, $q_m\not\in Z_i$.

Now, $Z_i$ is infinite since it's an infinitary genus.  Thus $Z_i-Z$ is 
infinite since $Z$ is finite.
In particular, $Z_i-Z$ is nonempty.  Thus, we can pick an individual $r\in 
Z_i-Z$ with shortest possible
reverse-path to a root
(every individual has a reverse-path to a root by Assumption A3).  By 
$(*)$, $Z$ contains all roots, so $r$ is not 
itself a root, and that shortest path
is not the empty path.  So $r$ has a parent on that minimal-length path.  
This parent \emph{must} be in $Z$,
because otherwise, we would have chosen the parent instead of $r$ (the 
parent has a shorter path to a root).

I've shown that $r$ has a parent in $Z$, but $r$ itself was chosen outside 
$Z$.  By definition this means $r$
is one of the $q_1,\ldots,q_m$.  This is nonsense, because $r\in Z_i$ and 
all of the $q_1,\ldots,q_m$
are absent from $Z_i$.

By contradiction, $Z$ is infinite, and so (being ancestrally closed) it is 
an infinitary genus.
I've shown that an arbitrary nonempty chain has a bound; by Zorn's Lemma, 
$\mathscr{X}$ has an extremal element,
i.e., there is an inspecies.
\end{proof}

\section{Results}
\label{section4}

The key to the structural properties of inspecies is the following result,
remarkable for being so powerful while having such a simple proof.

\begin{proposition}
\label{structureprop}
Suppose $C$ is an inspecies and $S\subseteq C$ is any infinite subset.
Then every individual in $C$ has a descendant in $S$.
\end{proposition}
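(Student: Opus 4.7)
The plan is to form the ancestral closure of $S$ and invoke the minimality of the inspecies $C$. Let $T$ be the set of all individuals $v\in V$ that have a descendant in $S$ (including each $w\in S$ as a descendant of itself). I would first check the definitional facts: $T$ is ancestrally closed, because if $u$ is an ancestor of some $v\in T$, and $v$ in turn is an ancestor of some $w\in S$, then by transitivity $u$ is an ancestor of $w$, so $u\in T$; and $T\subseteq C$, since every ancestor of a point of $S\subseteq C$ lies in $C$ by the ancestral closure of $C$ (Definition~\ref{def1}).

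Next, $T$ is infinite, since it contains $S$, which is infinite by hypothesis. So $T$ is an infinitary genus lying inside $C$. By the minimality built into the definition of an inspecies, this forces $T=C$, which is exactly the statement that every $v\in C$ has a descendant in $S$.

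The whole proof is essentially a direct unpacking of the definitions, which matches the author's remark that the result has ``such a simple proof.'' The only point that would need extra care is if ``descendant'' is read strictly (excluding $v$ itself): in that case one instead takes $T=\{v\in V:v\text{ has a proper descendant in }S\}$, and one must show $T$ is infinite before the minimality step can fire. Here is where the main (very mild) obstacle lies, and where Assumptions A1 and A2 enter. Were $T$ finite, then every non-root member of $S$ would be a child of some element of $T$; by A2 each element of $T$ has only finitely many children, and by A1 there are only finitely many roots, so $S$ would be contained in a finite set, contradicting the infinitude of $S$. In either reading the skeleton of the argument is identical: form the ancestral closure of $S$, verify it is an infinitary genus in $C$, and apply minimality.
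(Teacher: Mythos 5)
Your proposal is correct and follows essentially the same route as the paper: form the ancestral closure of $S$, check it is an infinitary genus contained in $C$, and apply minimality; your closing remark about the strict reading of ``descendant'' is precisely what the paper's terse appeal to Assumption A2 is covering. Nothing further is needed.
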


\begin{proof}
Let $C'$ be the set of ancestors of individuals in $S$.
Then $C'$ is clearly ancestrally closed;
since $S$ is infinite, Assumption A2 implies $C'$ is infinite;
altogether, $C'$ is an infinitary genus.
And $C'\subseteq C$ since $C$ is ancestrally closed,
so $C'=C$ by minimality.
\end{proof}

This has remarkable consequences for individuals within inspecies.  Take 
any property $P$,
which may hold of some individuals and not hold of other individuals.  
Within an inspecies,
every individual has a descendant with property $P$, or every individual 
has
a descendant without property $P$ (or both).  For instance, in an 
inspecies everybody has a vertebrate descendant,
or everybody has an invertebrate descendant.  If there is a fixed 
universal upper bound on the number of hairs an
individual can have on their body, then for any inspecies, there is some 
number $n$ such that everybody in the inspecies
has a descendant with exactly $n$ hairs on their body.
If we let Bertrand Russell choose the property $P$, he would no doubt 
choose
the property ``not a descendant of Bertrand Russell'' and thereby lead us 
to:

\begin{proposition}
\label{prolific}
In an inspecies, every individual is an ancestor of almost every 
individual.
\end{proposition}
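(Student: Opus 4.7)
The plan is to take Bertrand Russell's cue from the paragraph immediately preceding the statement and apply Proposition~\ref{structureprop} directly with the property $P(x) \equiv$ ``\emph{$x$ is not a descendant of $v$}'', where $v$ is an arbitrary member of the inspecies $C$. The goal is to show that the set of non-descendants of $v$ inside $C$ is finite.

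More precisely, fix an inspecies $C$ and an individual $v \in C$. Let $S$ be the set of members of $C$ that are neither $v$ nor descendants of $v$. The key step is to argue by contradiction: suppose $S$ is infinite. Then $S$ is an infinite subset of the inspecies $C$, so Proposition~\ref{structureprop} applies and tells us that \emph{every} individual of $C$ has a descendant lying in $S$. Specializing this to $v$ itself, we obtain some $w \in S$ with $w$ a descendant of $v$. But membership in $S$ was defined precisely to preclude being a descendant of $v$, so this is a contradiction.

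Hence $S$ is finite. Since $C \setminus S$ consists of $v$ together with the descendants of $v$ inside $C$, this says that $v$ is an ancestor of all but finitely many individuals of $C$, which is exactly the claim that $v$ is an ancestor of almost every individual (in the sense of Convention~\ref{conv1}). Since $v \in C$ was arbitrary, the proposition follows.

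I do not foresee any real obstacle: the entire content of the argument is choosing the right ``Russellian'' property and invoking Proposition~\ref{structureprop}. The only mild subtlety is a bookkeeping one, namely being careful about whether $v$ itself is counted as a descendant of $v$; this only changes $S$ by one element and hence does not affect finiteness. No appeal to Zorn's Lemma or to assumptions A1--A4 beyond those already used in Proposition~\ref{structureprop} is required.
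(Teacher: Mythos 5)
Your proposal is correct and is essentially identical to the paper's own proof: both define $S$ as the non-descendants of the chosen individual within the inspecies, assume $S$ is infinite, and derive a contradiction from Proposition~\ref{structureprop}. The only difference is your explicit bookkeeping about whether $v$ counts as its own descendant, which, as you note, is immaterial.
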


\begin{proof}
Let $u$ be a member of an inspecies and let $S$ be the set of members of 
that inspecies who are not descended from $u$.
If $S$ were infinite, then by Proposition~\ref{structureprop}, $u$ would 
have a descendant in $S$, which is absurd.
So $S$ is finite, meaning almost every member of the inspecies descends 
from $u$.
\end{proof}

With Proposition~\ref{prolific} in our arsenal we can productively compare 
the inspecies
with the \emph{tight cluster} notion of Dress et al.
If $C$ is an inspecies and we let
$D(\supset_{\approx} C)$ consist of all individuals in $V$ whose
descendants include \emph{almost} all of $C$, then 
Proposition~\ref{prolific} implies
$D(\supset_{\approx} C)=C$.
Therefore, every inspecies is a kind of one-sided version
of a tight cluster (it is one-sided because the clusters of Dress et 
al.~(2010) are designed to be closed
descendantially (at least when unborn future individuals are ignored) and 
ours are not).
A similar observation goes for
Baum's (2009) concept of \emph{organismic
exclusivity} (as described by Dress et al.)

The next proposition will theoretically reconcile morphological and 
non-morphological approaches to the species problem.

\begin{proposition}
\label{trichotomy}
\emph{(Morphological Trichotomy)}
Let $C$ be an inspecies and let $P$ be a property of individuals.  Exactly 
one of the following is true:
\begin{enumerate}
\item $P$ holds of almost every member of $C$,
\item $P$ fails of almost every member of $C$, or
\item every member of $C$ has both a descendant satisfying $P$ and a 
descendant failing $P$.
\end{enumerate}
\end{proposition}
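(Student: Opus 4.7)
The plan is to split $C$ according to the truth of $P$ and let Proposition~\ref{structureprop} do the work. Concretely, write $A = \{v \in C : P(v)\}$ and $B = \{v \in C : \neg P(v)\}$, so that $A$ and $B$ partition the infinite set $C$, forcing at least one of them to be infinite.

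First, I will establish that at least one of (1), (2), (3) holds, by cases on the cardinalities of $A$ and $B$. If $B$ is finite then $P$ holds of all but finitely many members of $C$, giving (1); symmetrically, $A$ finite gives (2). Otherwise both $A$ and $B$ are infinite subsets of the inspecies $C$, and I will apply Proposition~\ref{structureprop} twice: once with $S = A$ to conclude that every $v \in C$ has a descendant satisfying $P$, and once with $S = B$ to conclude that every $v \in C$ has a descendant failing $P$. That is precisely clause (3).

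Next I will verify the three clauses are pairwise exclusive. Clauses (1) and (2) cannot hold simultaneously because $A$ and $B$ partition $C$, so they cannot both be cofinite in the infinite set $C$. The remaining step is to show (3) is incompatible with (1) and (with symmetric argument) (2), and this is where Assumption~A3 enters: if (1) holds, then $B$ is finite; every element of $B$ has strictly smaller birthdate than itself for all its ancestors, and A3 guarantees that only finitely many individuals are born before any fixed time, so each element of $B$ has finitely many ancestors. Hence the union of ancestor sets of $B$ is finite. But (3) would make every element of $C$ an ancestor of some element of $B$, forcing $C$ itself to be finite and contradicting the fact that an inspecies is an infinitary genus (hence infinite).

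The heavy lifting has already been done by Proposition~\ref{structureprop}; the main thing to get right is the exclusivity argument, and in particular spotting that A3 (rather than A2) is the assumption needed to bound the ancestor set of a finite subset of $C$.
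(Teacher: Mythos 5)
Your existence argument is exactly the paper's: the paper's entire proof is the one-line observation that if neither (1) nor (2) holds then both the $P$-conformists and the $P$-rebels in $C$ are infinite, and Proposition~\ref{structureprop} applied to each of these two infinite subsets yields (3). Your partition into $A$ and $B$ and the two invocations of Proposition~\ref{structureprop} are that same argument spelled out, so on the disjunction you and the paper coincide.

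Where you go beyond the paper is the pairwise-exclusivity half, which the paper does not address at all, and there is one subtle point to flag. Your step ``(3) would make every element of $C$ an ancestor of some element of $B$'' silently reads the descendant in clause (3) as lying \emph{in $C$}: clause (3) as literally stated only promises a descendant failing $P$ somewhere in $V$, and members of an inspecies routinely have descendants outside the inspecies (indeed, removing a clade from an infinitary genus preserves ancestral closure, so minimality forces any finitely-fertile child of a member of $C$ out of $C$). Under the literal reading, exclusivity can actually fail: in a binary tree whose inspecies are the infinite root-paths, take $P$ to be ``is a member of $C$''; then (1) holds, yet every member of $C$ also has a descendant in $C$ (satisfying $P$) and a child off the path (failing $P$), so (3) holds too. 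Under the descendants-in-$C$ reading --- which is the version of (3) that Proposition~\ref{structureprop} actually delivers, so it is presumably the intended one --- your argument is correct, and your use of A3 to bound the ancestor set of the finite set $B$ and derive the contradiction with $|C|=\infty$ is exactly right. You should simply state explicitly which reading of clause (3) you are using before running the exclusivity argument.
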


\begin{proof}
Immediate by Proposition~\ref{structureprop}: if neither (1) nor (2) 
holds,
then both the $P$-conformists and the $P$-rebels are infinite in number.
\end{proof}

For example, in an inspecies of birds, a property of type $1$ might be 
``has feathers'', a property of type $2$
might be ``has gills'', and a property of type $3$ might be ``is male'', 
or even ``hatched between Monday and Thursday''.

In our opinion, Proposition~\ref{trichotomy}
theoretically reconciles morphological species notions (recognizable,
ecological, phylogenetic, and so on)
with non-morphological species concepts.
Inspecies are defined
solely based on ancestral relations with reckless disregard for any other 
considerations--
and yet, Morphological Trichotomy (Proposition 7) rigorously
shows that morphological aspects of species are hard to avoid.

If we have understood him correctly, de Queiroz (2007) would say that the 
uniformities
suggested by Proposition~\ref{trichotomy} are what he calls \emph{former 
secondary species
criteria}, which ``can be used to define subcategories of the species 
category-- that is, to recognize different classes of species''
and which, therefore, we submit as evidence verifying we are justified in
referring to infinitary species as a species notion.

\begin{proposition}
\label{nesting}
If two inspecies have infinite intersection, they are equal.  That is, 
distinct inspecies have almost no members in common.
\end{proposition}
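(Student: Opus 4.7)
The plan is to use the minimality property built into the definition of an inspecies, applied to the intersection of two inspecies. Let $C_1$ and $C_2$ be inspecies with $C_1\cap C_2$ infinite, and set $S=C_1\cap C_2$. The goal is to show $S$ is itself an infinitary genus contained in both $C_1$ and $C_2$, so that minimality forces $S=C_1=C_2$.

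The key verification is that $S$ is ancestrally closed: if $v\in S$ and $u$ is an ancestor of $v$, then $u\in C_1$ (since $C_1$ is ancestrally closed) and $u\in C_2$ (since $C_2$ is ancestrally closed), so $u\in S$. Combined with the hypothesis that $S$ is infinite, this makes $S$ an infinitary genus by Definition~\ref{def1}.

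Now $S\subseteq C_1$ is an infinitary genus inside the inspecies $C_1$, so minimality of $C_1$ forces $S=C_1$; symmetrically $S=C_2$, giving $C_1=C_2$. The second sentence of the proposition is then just the contrapositive: if $C_1\neq C_2$, then $C_1\cap C_2$ cannot be infinite, hence is finite, so $C_1$ and $C_2$ share only finitely many members.

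There is no real obstacle here — the only thing to watch is the logical point that our definition of inspecies demands a \emph{strict} subset for refinement, so the chain $S\subseteq C_1$ with $S$ an infinitary genus immediately yields $S=C_1$ rather than requiring a further argument. (Alternatively, one could route the proof through Proposition~\ref{structureprop}: every $v\in C_1$ has a descendant in the infinite subset $S\subseteq C_2$, and ancestral closure of $C_2$ then puts $v\in C_2$; but the direct minimality argument above is shorter.)
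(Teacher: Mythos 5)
Your proof is correct, but it takes a different (and arguably more direct) route than the paper. You observe that $S=C_1\cap C_2$ is itself ancestrally closed and infinite, hence an infinitary genus sitting inside each inspecies, and then let minimality collapse $S=C_1=C_2$ in one step. The paper instead argues elementwise via Proposition~\ref{structureprop}: for $v\in C_1$, that proposition supplies a descendant of $v$ in the infinite subset $C_1\cap C_2\subseteq C_2$, and ancestral closure of $C_2$ then forces $v\in C_2$ --- exactly the alternative you sketch in your closing parenthetical. Both arguments are sound and about equally short; yours has the virtue of needing nothing beyond the definitions, while the paper's keeps with its theme of deriving all the structural results in that section as corollaries of Proposition~\ref{structureprop}. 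Your handling of the ``strictly smaller'' clause in the definition of inspecies is also exactly right: an infinitary genus contained in an inspecies must equal it, with no further argument needed.
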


\begin{proof}
Let $C_1$ and $C_2$ be two inspecies with $|C_1\cap C_2|=\infty$.  By 
symmetry it's enough to show $C_1\subseteq C_2$.
Let $v$ be an individual in $C_1$.  By Proposition~\ref{structureprop}, 
$v$ has a descendant in $C_1\cap C_2$, hence in $C_2$.
Since $C_2$ is ancestrally closed, $v\in C_2$.
\end{proof}

Sadly, our infinitary genus notion does not have the coveted 
\emph{nesting property} described by Dress et al.
On the other hand, Proposition~\ref{nesting} shows that (ignoring finite 
sets)
inspecies have that property \emph{too} strongly!
By this we mean that if we copy the techniques in Dress et al.~(2010) to 
turn a nested cluster notion into a forest notion,
we get a degenerate forest with all vertices isolated and no arcs at all.
This hints that, where the Tree of Life is concerned, inspecies may be the 
``leaves at infinity,''
while the clusters of Dress et al.~(2010) or the composite species of 
Kornet and McAllister (2005) may be the actual nodes.

As promised in the introduction, we will now prove that the Knight-Darwin 
Law implies a Dual Knight-Darwin Law (see Figure 3).

\clearpage

\begin{multicols}{2}
\begin{theorem}
\label{dualknightdarwin}
Assume the Knight-Darwin Law: that $G$ has no infinite path of vertices 
each with $<2$ \emph{parents}.
Assume also that A1--A4 hold.  Then the following facts are implied:
\begin{enumerate}
\item \emph{(The Dual Knight-Darwin Law)}
No inspecies has an infinite path of vertices each with $<2$ 
\emph{children}.
\item
There are infinitely many individuals each of whom has multiple children.
\item
Within an inspecies, every individual has a pair of descendants that breed 
together.
\end{enumerate}
\end{theorem}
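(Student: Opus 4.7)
The plan is to tackle (1) by contradiction against the Knight--Darwin Law, then to deduce (2) by producing an infinite path inside an inspecies and applying (1), and finally to prove (3) by combining the Knight--Darwin Law itself with Proposition~\ref{prolific}.

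For the Dual Knight--Darwin Law (1), I would suppose for contradiction that some inspecies $C$ contains an infinite path $v_1, v_2, v_3, \ldots$ in which each $v_i$ has fewer than two children. Since the path continues, each $v_i$ has exactly one child, namely $v_{i+1}$, and hence the descendants of $v_1$ in $G$ are exactly $\{v_1, v_2, \ldots\}$. By Proposition~\ref{prolific}, $C$ differs from this set by only a finite residue $F$. Any parent of some $v_{i+1}$ other than $v_i$ lies in $C$ by ancestral closure, cannot equal any other $v_j$ (whose unique child is $v_{j+1}$), and therefore lies in $F$. Assumption (A2) bounds the total number of children of members of $F$, so only finitely many $v_{i+1}$ can receive an extra parent from $F$; an infinite tail of the path then consists of vertices each with exactly one parent, contradicting the Knight--Darwin Law.

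For (2), I would first produce an infinite path inside an inspecies and then apply (1). Theorem~\ref{maintheorem} provides an inspecies $C$; pick any $v_0 \in C$. Proposition~\ref{prolific} supplies infinitely many strict $C$-descendants of $v_0$, while (A2) keeps out-degrees finite, so a K\"onig's-lemma argument on the tree of $C$-descendants of $v_0$ (noting that any vertex with a $C$-descendant is automatically in $C$ by ancestral closure) yields an infinite path $v_0, v_1, v_2, \ldots$ inside $C$. Statement (1) then forces infinitely many $v_i$ on this path to have at least two children, which is (2).

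For (3), I would fix an inspecies $C$ and any $v \in C$, and build an infinite path $v = v_0, v_1, v_2, \ldots$ inside $C$ by the same construction. Applying the Knight--Darwin Law to tails of this path produces infinitely many indices $i$ such that $v_i$ has at least two parents; for each such $i$, pick an extra parent $p_i \neq v_{i-1}$, which lies in $C$ by ancestral closure. By Proposition~\ref{prolific}, only finitely many members of $C$ fail to be strict descendants of $v$, and each of those exceptional members has finitely many children by (A2), so only finitely many of the $p_i$ can fail to be strict descendants of $v$. Choosing any remaining good index $i \geq 2$, both $v_{i-1}$ and $p_i$ are strict descendants of $v$ sharing the common child $v_i$, i.e., they breed together. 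The most delicate point in the whole argument is the bookkeeping in (1): turning ``$C$ is almost the single path'' into ``an infinite tail of the path has only one parent per vertex'' essentially requires (A2) to absorb the finite exceptional set $F$.
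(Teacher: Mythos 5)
Your proof is correct. For part (1) you take a genuinely different route from the paper: the paper also reduces to the case where every $v_i$ has a single child, but it then uses the Knight--Darwin Law to extract infinitely many path vertices $v_{i_j}$ having a second parent $p_j$, notes via (A2) that there are infinitely many such $p_j$, applies Proposition~\ref{structureprop} to find one $p_j$ descended from $v_1$, and derives a contradiction with the fork-free structure of the path. You instead use Proposition~\ref{prolific} to pin the whole inspecies down to the path plus a finite residue $F$, push every extra parent into $F$, and let (A2) produce an infinite single-parent tail contradicting the Knight--Darwin Law directly. Your arrangement lands the contradiction on the KDL itself rather than on the single-child hypothesis, and it exposes the stronger structural fact that a single-child infinite path would essentially exhaust its inspecies; the paper's version gets by with the more primitive Proposition~\ref{structureprop} and avoids your bookkeeping at the cost of a somewhat informal ``there must have been a fork'' step. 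For parts (2) and (3) you follow the paper's idea but supply details it elides: the paper declares (2) ``immediate'' from (1) and Theorem~\ref{maintheorem} without constructing the infinite path that your K\"onig-style descent provides, and its proof of (3) asserts the existence of an individual ``both of whose parents'' descend from $v$ without saying where a two-parent individual comes from --- your application of the Knight--Darwin Law to tails of a constructed path in the inspecies fills exactly that gap.
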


\columnbreak

\vspace{-10pt}
\begin{center}
\includegraphics[scale=.7]{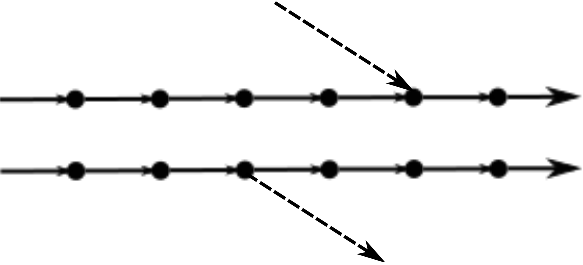}
\end{center}
\vspace{-10pt}
\begin{quote}
Figure 3: The Knight-Darwin Law (top) and the Dual Knight-Darwin Law
(bottom).
\end{quote}
\vspace{-.6in}
\end{multicols}

\begin{proof}
(1)
Let $C$ be an inspecies, and
let $v_1,v_2,\ldots$ be any infinite path in $C$.  We must show some $v_i$ 
has $\geq 2$ children.
For sake of contradiction, assume not.
By the Knight-Darwin Law, there is some $v_{i_1}$ with $\geq 2$ parents.
And then, by the Knight-Darwin Law applied to 
$v_{i_1+1},v_{i_1+2},\ldots$, there is
some $v_{i_2}$ ($i_2>i_1$) with $\geq 2$ parents.
This process continues forever: there are $i_1<i_2<i_3<\cdots$
such that each $v_{i_j}$ has $\geq 2$ parents.
For each $j>1$, let $p_j$ be a parent of $v_{i_j}$ different than 
$v_{i_j-1}$.
All these new parents are in $C$ by ancestral closure, and by Assumption
A2, there are infinitely
many of these new parents.  By Proposition~\ref{structureprop}, one of the 
$p_j$ is descended from $v_1$.
So $v_{i_j}$ has at least two distinct parents, $v_{i_j-1}$ and $p_j$, 
both of whom are descended from $v_1$ (or possibly
one of them could equal $v_1$).
This implies (by acyclicity) that if we look at the path 
$v_1,\ldots,v_{i_j}$, there must have been a fork somewhere,
in order for $p_j$ to be born
and breed with $v_{i_j-1}$.  This contradicts the assumption that none of 
the $v_i$'s have $\geq 2$ children.

(2) Immediate from (1) along with Theorem~\ref{maintheorem}.

(3) Let $v$ be an individual in an inspecies $C$.
By Proposition~\ref{prolific}, there are only finitely many individuals in 
$C$ not descended from $v$,
and by Assumption A2, these finitely many non-$v$-descendants have only
finitely many children.
Thus, there must be some individual in $C$ descended from $v$, both of 
whose parents are also descended from $v$.
Those two parents are therefore a pair of descendants of $v$ which breed 
together, as desired.
\end{proof}

\subsection{A Response to Sturmfels}

B.~Sturmfels asked (2005): ``Can biology lead to new theorems?''
We have self-imposed upon ourselves three \emph{criteria} that we feel obliged to meet in order
to satisfy ourselves in responding.  In our opinion, in order for a mathematical biology paper
to ``Lead to new theorems,'' it should:
\begin{itemize}
\item (T1) Include simple and interesting theorems about some new type of system motivated by biology.
\item (T2) Demonstrate breadth, by using the new system to give original new proofs of some already known results (or 
open problems).
\item (T3) Demonstrate nontriviality, by stating a nontrivial open problem which is not too contrived.
\end{itemize}

If we have erred in these criteria, we have attempted to err on the side of stringency.  T3 could of course be replaced 
by the inclusion of a theorem with an interesting and difficult proof (e.g.~a \emph{non}-open problem).

For T1, we consider Theorem 3 and Propositions 5, 6, 7, and 8 sufficient.  We have given some interesting theorems about 
new systems motivated by biology.

For T2, we consider Theorem 9 partially satisfactory.  We consider it common knowledge that a genealogical human family 
tree must necessarily either keep branching (in the sense of multiple children being born of a parent) or keep bringing 
in new roots via marriage, or both, and if it stops doing so, successive generations will dwindle in size until 
extinction.  Theorem 9 (part 2) formalizes this, and (part 1) (along with Theorem 3) generalizes it.  Likewise, Theorem 9 
(part 3) formalizes, strengthens, and reproves a certain notion that inbreeding is unavoidable.  The original notion is 
that if we assume no inbreeding, then going back (say) 50 generations we should expect a human population of at least 
$2^{50}\approx 10^{15}$ by counting nothing but the distinct great${}^{48}$-grandparents of this author.  To satisfy T2 
further, we have used infinite graphs in systematic biology to give an alternate proof (in (Alexander, preprint)) of a 
result from (Johnston 2010) about Conway's Life-like games.

For T3, we state the following open problem.

\begin{openproblem}
If $s=(s_0,s_1,\ldots)$ is an infinite sequence from the alphabet 
$\{M,F\}$,
say that $s$ is \emph{biologically unavoidable}
if in every possible gendered population conforming to A1--A4 
(\emph{gendered} here
meaning that every non-root has a male parent and a female parent),
there is a sequence of organisms, each a parent of the next, whose
genders match $s$ (the sequence does not need to begin with a root).
What are the biologically unavoidable sequences?
\end{openproblem}

\begin{wrapfigure}{r}[.5in]{0.5\textwidth}
\vspace{-10pt}
\begin{center}
\includegraphics[scale=.5]{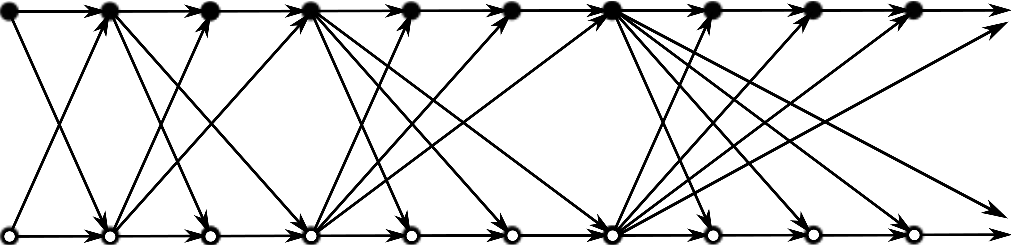}
\end{center}
\vspace{-10pt}
\begin{quote}
Figure 4:
A hypothetical population (modified from an example of
T.~J.~Carlson) in which not every gender sequence is realized.
Solid vertices represent males and open vertices represent females.
One particular sequence absent in this population is
$M^2FM^5F\cdots M^{3n-1}F\cdots$.
\end{quote}
\vspace{-.3in}
\end{wrapfigure}

A priori, it could be that every sequence is biologically unavoidable.
But 
in (Alexander, preprint) we show a counterexample (see Figure 4).
It could also be, a priori, that \emph{no} sequence is biologically 
unavoidable, but in the same paper, we show that every eventually-periodic 
sequence is unavoidable.  It remains an open question even whether there 
is a single
unavoidable sequence not eventually periodic.

Note, the result about every eventually periodic sequence being 
unavoidable could not be true if the sequences of organisms were required
to start at a root, as the following example shows.  According to 
J.~Diamond (1997),
``In an extreme scenario the first settlers [of Australia] are pictured as
$\ldots$ a single pregnant young woman carrying a male fetus''.
Define the set of \emph{Aboriginals} to consist of that Mother, her Child, 
her Child's Father,
and all the joint descendants of the Mother and Child.  Thus, the 
\emph{roots} are exactly the Mother
and the Father.  Even if we assume the Aboriginals satisfy the hypotheses 
of the Problem,
there can be no sequence starting at a root and having genders
$M,F,M,F,\ldots$, for the simple reason that there is only one male root 
and he has no female Aboriginal child.

The theorems in this section are certainly not exhaustive.
We have opted to limit ourselves to theorems
of particular interest in systematic biology.
We hope they are adequate to establish infinitary 
genera and species as at least a potential
candidate for a two-way bridge between biology and the more abstract and 
theoretical side of mathematics.

\section{Infinitary Species and Internodons}
\label{section5}

There are noteworthy similarities between infininitary species and the 
internodal species concepts
of\footnote{This work reflects a stepwise formalization of Hennig's internodal species:
unavoidability of the implied permanency of cleavages (Kornet, 1993), formal implementation
(Kornet et al.~1995); lowering species' status, because of implied short lifespans,
to building blocks (internodons) (Kornet et al.~1995); remedial grouping by secondary
morphological criteria into composite species (Kornet \& McAllister, 2005).  The entire
project was first informally printed as a PhD thesis (\emph{Reconstructing Species; Demarcations in
Genealogical Networks}, 1993, Leiden University).}
Kornet 
(1993), Kornet, Metz, and Schellinx (1995), and 
Kornet and McAllister (2005) (we shall focus on the 1995 paper, as it is 
mathematically the most straightforward).

Informally,
internodons  
are\footnote{To quote Kornet et al.~(1995), internodons are 
``parts of a genealogical network of individual organisms between two 
successive permanent splits or between a permanent split and an extinction 
event.''} the largest clusters subject to the 
constraint that 
\emph{permanent splits}
in the genealogical network give rise to new internodons.  Thus, if (as in 
Figure 1) a branch 
in $G$ splits permanently into two smaller branches, 
near the splitting point there are three internodons: one for the branch 
pre-split, and two for each smaller branch.

The primary similarity
between internodons and inspecies
is the dependence on the \emph{future} 
(to establish a split's permanence may require
infinitely futuristic knowledge).
Both are defined purely from $G$ (and 
birthdates), \emph{sans} morphology.
Both respect permanent splits (individuals on 
opposite sides thereof can share neither an 
internodon nor an inspecies).  And both notions 
arose from attempts to bridge math and biology:
Kornet et al.~attempted to derive new biology from math,
and this author attempted to derive new math from biology.

Our attempt to 
\emph{contrast} inspecies and internodons leads to  
another 
application of infinite graphs to the species problem, the idea of 
benchmark populations.

\subsection{Benchmark Populations}

As the species problem is difficult, one strategy 
might be to
try to solve tiny sub-problems.  A 
\emph{benchmark population} is a hypothetical population, together with a 
question which would be trivial in sight of a species 
problem solution.  One such question might be, ``does this 
population 
include members of multiple species, or just one?''  If we 
solve the species problem, such a question should be 
straightforward.  Til then, these questions are
species sub-problems, which we might hope to answer before answering 
the full species problem.

Interesting benchmark populations will most likely be infinite.
As the following example will show, a good benchmark population is
one which exhibits some ``edge case'' pathology 
specifically meant to push species notions to their limits.

\begin{wrapfigure}{r}[.3in]{0.4\textwidth}
\vspace{-15pt}
\begin{center}
\includegraphics[scale=1]{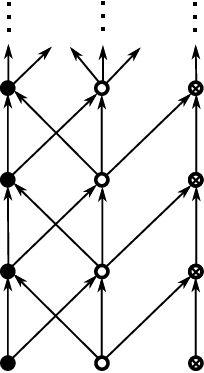}
\end{center}
\vspace{-10pt}
\begin{quote}
Figure 5: The $\frac{1}{3}$ variant population.
\end{quote}
\vspace{-.8in}
\end{wrapfigure}
Our attempt to contrast 
inspecies 
and internodons led us to a particular benchmark population which we 
call the \emph{one-third variant} population\footnote{As candidates one may think of
organisms with complex haploid/diploid life cycles such as social animals (ants, bees, termites, weevils)
or the creative algae, fungi, mosses.  But simpler still would be a genuine sexual network with an additional
\emph{variant male type} with a $y'$ chromosome.  Say, this variant male's sperm cells carrying the $y'$ chromosome 
always 
outcompete its own sperm cells carrying the $x$ chromosome.  Then, per generation we have one (non-variant) \emph{genuine 
$xy$ male} (generating $x$ and $y$ sperm cells), one (non-variant) \emph{genuine $xx$ female} (generating $x$ egg cells 
only), and one \emph{variant $xy'$ male} (generating $y'$ sperm cells that always outcompete its $x$ sperm cells).  In 
this way a \emph{variant} can only co-parent (with the \emph{genuine female}) an $xy'$ variant son.} (see Figure 5).  
This population consists of infinitely many generations, each 
with one \emph{male}, one \emph{female},
and one \emph{variant}.  Each generation's male and 
female produce the next generation's male and female, 
and each generation's female and variant
produce the next generation's 
variant.  The question is:  are the variants
in the same species as the non-variants?

Any well-defined species notion (depending only on graph-theoretical 
considerations) should either answer the above 
question,
explain why the question is ill-posed,
or else use assumptions about reality which rule out the 
$\frac{1}{3}$-variant population.

Internodal species notions say 
that variants are in the same species as non-variants
(there are no splits in sight).  Infinitary species 
say that the variants are \emph{not} the same species: 
in fact the variants are not in any inspecies at 
all
(lest the
directed path of variants would violate the Dual Knight-Darwin Law
(Theorem~\ref{dualknightdarwin} part 1)).

\subsection{Infinitary Species (Type II)}

Kornet et al.~(1995) suggested that internodons
are species building blocks (a suggestion bearing 
fruit in
Kornet \& McAllister (2005)).  This raises one's hopes that perhaps 
infinitary
species are unions of internodons.  The $\frac{1}{3}$-variant population
dashes those hopes.

This motivates a species notion using the same principles
as inspecies but admitting decomposition into internodons.
With the $\frac{1}{3}$-variant population in mind, what must be found is
a sense in which, e.g., a variant is
an ``ancestor'' of later non-variants.

Note that in general $u$ is an ancestor of $v$ if and only if $u$ is older than $v$
and $G$ has a \emph{directed} path from $u$ to $v$ avoiding vertices older than $u$.
So call $u$ an \emph{undirected-ancestor} (or simply an \emph{undirancestor})\footnote{Thus, the undirdescendants of $u$
are precisely the members of the \emph{gross dynasty} of $u$ (minus $u$ and its same-exact-age peers), in the
language of Kornet et al.\ (1995).} of $v$ if $u$ is older than $v$
and $G$ has an \emph{undirected} path from $u$ to $v$ avoiding vertices older than $u$.
Figure 6 shows a small population (left) and the corresponding undirancestral relations (middle).
The latter clause-- that $G$ contains such an undirected path--
is written $u(\mathbf{PC}_{\geq u})v$ in Kornet et al.\ (1995), a notation
we too shall adopt.
This makes variants undirancestors of future non-variants
in the $\frac{1}{3}$-variant population.
Say a set of individuals is \emph{undirancestrally closed}
if it contains all its own undirancestors.

\vspace{-.05in}
\begin{center}
\includegraphics[scale=.7]{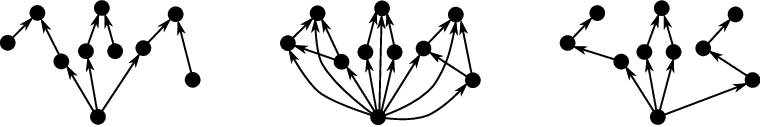}
\end{center}

\vspace{-.2in}
\noindent
\begin{quote}
Figure 6: Left: A small population and its parental relations (birthdates
indicated by vertical height).  Middle: Its undirancestral relations.
Right: Its undirparental relations.
\end{quote}

Say $u$ is an \emph{undirected-parent} (or simply \emph{undirparent}) of $v$ if $u$ is an undirancestor of $v$ and there is
no $w$ such that $u$ is an undirancestor of $w$ and $w$ is an undirancestor
of $v$ (this is a general recipe for reverse-engineering parenthood notions from 
ancestorhood notions).  Figure 6 shows a small population (left) and its undirparent
relations (right).

We'd like to
apply our infinitary species machinery 
to undirparenthood.  But first we must ensure we
have not corrupted any of Assumptions A1--A4.

\begin{lemma}
\label{newGfromold}
Let $G'$ be the graph whose vertices are the organisms of $G$,
and in which an arc is directed from $u$ to $v$ if and only if $u$ is an 
undirparent of $v$; endow $G'$ with the same birthdates as $G$.
Given that $G$ satisfies A1--A4, so does $G'$.
\end{lemma}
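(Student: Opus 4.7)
The plan is to verify A1--A4 for $G'$ one at a time, using that $G'$ has the same vertex set $V$ and birthdate function $t$ as $G$. A4 transfers trivially, the cardinality clause of A3 is literally inherited, and the strict inequality $t(u)<t(v)$ on arcs of $G'$ holds because an undirparent $u$ of $v$ is by definition an undirancestor, hence older than $v$. For A1, I would show every root of $G'$ is a root of $G$: if $v$ is not a root of $G$, any $G$-parent of $v$ is an undirancestor via the length-one path, so the set of undirancestors of $v$ is nonempty and (being a subset of $\{w:t(w)<t(v)\}$) finite by A3; hence it has a maximal element under the strict partial order of undirancestry, and any such maximal element is by definition an undirparent of $v$, so $v$ is not a root of $G'$.

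The main obstacle is A2. Fix $u\in V$ and let $v$ be an undirchild of $u$; pick a witnessing undirected path $P : u = x_0, x_1, \ldots, x_n = v$ all of whose vertices are not older than $u$. The crucial observation is that whenever some internal $x_i$ satisfies $t(u)<t(x_i)<t(v)$, the suffix $x_i, x_{i+1}, \ldots, v$ must visit some vertex strictly older than $x_i$; otherwise $x_i$ would be an intermediate undirancestor between $u$ and $v$, contradicting $u$'s undirparent status. Applied at $i=n-1$ this forces either $t(x_{n-1})=t(u)$ (\emph{Case A}: $v$ is a $G$-child of a vertex of birthdate $t(u)$, and there are only finitely many such $v$ by A3 together with A2) or $t(x_{n-1})>t(v)$ (\emph{Case B}: the last step of $P$ comes from a $G$-child of $v$). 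Iterating the same observation backward through $P$ in Case B, every $x_k$ for $k$ strictly larger than $j^\ast := \max\{j<n : t(x_j)=t(u)\}$ must satisfy $t(x_k)\geq t(v)$; in particular $x_{j^\ast+1}$ is a $G$-child of some vertex of birthdate $t(u)$ with $t(x_{j^\ast+1})\geq t(v)$. The set of $G$-children of vertices of birthdate $t(u)$ is finite by A3 plus A2, so its maximum birthdate $T^\ast$ is a finite upper bound on $t(v)$, which by A3 confines $v$ to a finite set.

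The hardest step is the rigidity argument in Case B: once the birthdate profile $(t(x_i))$ has risen above $t(v)$ in the tail of $P$, it cannot re-enter the open interval $(t(u),t(v))$ without creating an intermediate undirancestor. Granted this rigidity, the finiteness of the undirchildren of $u$ reduces to a straightforward counting argument using A2 and A3 for $G$.
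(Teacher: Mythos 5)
Your proposal is correct and follows essentially the same route as the paper: A3 and A4 are immediate, A1 is handled by extracting an undirparent from the finite set of undirancestors (the paper uses a terminating chain of blockers where you use a maximal element, which is the same finiteness argument), and your A2 proof hinges on exactly the paper's mechanism --- the last path vertex $x_{j^\ast}$ sharing $u$'s birthdate, the non-existence of an intermediate undirancestor forcing the tail of the witnessing path to stay at birthdate $\geq t(v)$, and the finiteness (by A2 and A3) of the $G$-children of vertices born at time $t(u)$. The only cosmetic difference is that you state the bound $t(v)\leq T^\ast$ directly for an arbitrary undirchild, whereas the paper argues by contradiction with an undirchild chosen younger than all such children.
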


Note that in the graph $G'$ of the lemma, for any vertices $u,v\in G$, $u$ 
is an undirparent (resp.~undirancestor) of $v$ in $G$ iff $u$ 
is a parent (resp.~ancestor) of $v$ in $G'$.

\begin{proof}[Proof of Lemma~\ref{newGfromold}]
\emph{$G'$ Satisfies A1.}
First we claim ($*$) that any vertex (say $v$) with a parent (say $u$) must have an undirparent.
If $u$ is an undirparent of $v$, this is trivial.
Otherwise, since $u$ is clearly an undirancestor of $v$, there is some $w$ blocking $u$ from being
an undirparent of $v$.
If $w$ is an undirparent of $v$, we're done; if not, there is some $w'$ blocking $w$ from being
an undirparent of $v$... this process must terminate, because (by A3 in $G$) there are only
finitely many individuals born before $v$.  This proves ($*$).
The contrapositive of ($*$) says: any individual with no undirparent must 
have no parent.
There are only finitely many such individuals, since $G$ satisfies A1.

\emph{$G'$ satisfies A2.}  Let $u\in V$, we must show $u$ has only finitely many undirchildren.
Assume, for sake of contradiction, $u$ has infinitely many undirchildren.
By A3, only finitely many vertices share $u$'s birthdate, and by A2 these have finitely many children,
so, by A3 again, $u$ has an underchild $v$ younger than all children of all vertices with $u$'s birthdate.
Since $u$ is an undirancestor of $v$, there is an undirected path $u=u_0,\ldots,u_n=v$ avoiding
vertices born before $u$.
Let $j<n$ be maximal such that $t(u_j)=t(u)$.
We chose $v$ younger than all children of $u_j$, so $n>j+1$.
Thus it makes sense to pick $j<i<n$ such that $u_i$ is as old as possible among all such choices for $i$.
Since $u_0,\ldots,u_n$ witnesses $u(\mathbf{PC}_{\geq u})v$, it follows that $u_0,\ldots,u_i$ witnesses $u(\mathbf{PC}_{\geq u})u_i$.
By maximality of $j$, $t(u_i)>t(u)$, so this shows $u$ is an undirancestor of $u_i$.
Since $u_i$ was chosen as old as possible among $v_{j+1},\ldots,v_{n-1}$,
$u_i,\ldots,u_{n-1}$ avoids vertices born before $u_i$.
And $u_n=v$ is younger than $u_i$, because $t(u_i)\leq t(u_{j+1})$ (by choice of $i$) and $t(u_{j+1})<t(v)$ (by choice of $v$,
since $u_{j+1}$ is a child of $u_j$, which has $u$'s birthdate).
Thus $u_i,\ldots,u_n$ witnesses $u_i(\mathbf{PC}_{\geq u_i})v$, and since $t(u_i)<t(v)$ this shows $u_i$ is an undirancestor of $v$.
Letting $w=u_i$, this violates the definition of $u$ being an undirparent of $v$, a contradiction as desired.

\emph{$G'$ satisfies A3.} If $u$ is an undirparent of $v$, then in particular $u$ is an
undirancestor of $v$, so by definition $t(u)<t(v)$.  The other part of A3 (the finiteness of $\{v\in V\,:\,t(v)<x\}$)
is trivial since $G'$ has the same birthdates as $G$.


\emph{$G'$ satisfies A4.} $G$ is infinite (by A4), and $G'$ has the same vertices, so $G'$ is infinite.
\end{proof}

One very special case is if we assume no two organisms ever share a birthdate.

\begin{proposition}
Let $G'$ be as in Lemma~\ref{newGfromold}.
If no two vertices share the exact same birthdate, then $G'$ is a forest.
\end{proposition}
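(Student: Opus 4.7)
The plan is to show two things: (a) in $G'$ every vertex has at most one undirparent, and (b) combined with the DAG structure that $G'$ inherits from the birthdate ordering, this forces the underlying undirected graph to have no cycles, which is precisely what we need in order to call $G'$ a forest.

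For (a), I will argue by contradiction: suppose $v$ has two distinct undirparents $u_1,u_2$. Since no two birthdates coincide, we may assume $t(u_1)<t(u_2)$. My goal is to show that $u_1$ is an undirancestor of $u_2$, for then $u_2$ itself is a witness $w$ showing that $u_1$ is \emph{not} an undirparent of $v$ (since $u_2$ is already known to be an undirancestor of $v$), contradicting the hypothesis. To produce the required $u_1(\mathbf{PC}_{\geq u_1})u_2$ path, I take an undirected path $P_1$ witnessing $u_1(\mathbf{PC}_{\geq u_1})v$ and an undirected path $P_2$ witnessing $u_2(\mathbf{PC}_{\geq u_2})v$, and concatenate $P_1$ with the reverse of $P_2$. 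Every vertex on $P_1$ has birthdate $\geq t(u_1)$ by construction, and every vertex on $P_2$ has birthdate $\geq t(u_2)>t(u_1)$; hence the entire walk avoids vertices older than $u_1$. Pruning repeated vertices yields a genuine path, establishing $u_1(\mathbf{PC}_{\geq u_1})u_2$, and since $t(u_1)<t(u_2)$ this makes $u_1$ an undirancestor of $u_2$, closing the contradiction.

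For (b), I will use the standard maximum-birthdate trick. Suppose toward contradiction that the underlying undirected graph of $G'$ contains a cycle $v_0{-}v_1{-}\cdots{-}v_{k-1}{-}v_0$; pick $v_i$ whose birthdate is maximal among those of the cycle vertices. By the no-ties hypothesis its two cycle-neighbors $v_{i-1},v_{i+1}$ each have strictly smaller birthdate, and because an arc in $G'$ always runs from the older to the younger endpoint, both incident cycle-edges must be directed into $v_i$. Thus $v_i$ has two distinct undirparents, contradicting (a).

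The heart of the argument, and the only place where a real idea is needed, is the path-concatenation in step (a); this is also exactly where the no-ties hypothesis is used in an essential way, since otherwise $u_1$ and $u_2$ could have equal birthdates and neither would qualify as an undirancestor of the other, so two coexisting undirparents could not be ruled out. The cycle-killing argument in (b) and the observation that $G'$ has no self-loops or reversed double-edges (both immediate from $t(u)<t(v)$ along every arc) are then entirely routine.
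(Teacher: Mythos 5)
Your proposal is correct and follows essentially the same route as the paper: the paper likewise reduces the existence of an undirected cycle (via the birthdate ordering from A3) to some vertex having two distinct undirparents $v_1,v_2$ with $t(v_1)<t(v_2)$, and then observes that $v_1(\mathbf{PC}_{\geq v_1})v_2$ ``via $u$''---which is exactly your path-concatenation argument spelled out---so that $w=v_2$ contradicts $v_1$ being an undirparent. Your write-up just makes explicit the details the paper compresses into that parenthetical.
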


\begin{proof}
Assume no vertices share the same birthdate.
To show $G'$ is a forest, it suffices to show $G'$ contains no cycles.  By A3, existence of a cycle in $G'$ implies
some organism $u$ has $\geq 2$ distinct undirparents $v_1,v_2$.
By hypothesis, $t(v_1)\not=t(v_2)$, we may assume $t(v_1)<t(v_2)$.
It follows $v_1$ is an undirancestor of $v_2$ (we have $v_1(\mathbf{PC}_{\geq v_1})v_2$ via $u$);
letting $w=v_2$ violates the definition of $v_1$ being an undirparent of $u$.
\end{proof}

\begin{multicols}{2}
In Figure 7, we induce unique birthdates in the network from Hennig's Figure 4 by slightly tilting it;
thin edges are Hennig's original parental relations,
thick edges are undirparental relations (which make a tree, with precisely one splitting point
exactly where internodal speciation occurs).  Note we assume the minor cleavages late in the network
are temporary (i.e., that the network continues beyond what is shown, and minor branches rejoined later).

By an \emph{infinitary species (type II)} for $G$, I mean an infinitary 
species for the graph $G'$ of Lemma~\ref{newGfromold}.
In other words, an infinitary species (type II) is an infinite set of 
individuals closed under undirancestry and which cannot possibly be 
shrunk while preserving these properties.
With A1--A4, an inspecies (type II) exists by Lemma~\ref{newGfromold}
and Theorem~\ref{maintheorem}.
All the results from Section~\ref{section4} carry over to inspecies (type 
II), with the prefix ``undir'' attached appropriately.
For example, Proposition~\ref{prolific} says that within an inspecies 
(type II), everyone is an undirancestor of almost everyone.
In case no organisms share the exact same birthdate, infinitary species (type II)
are simply the infinite branches in the forest $G'$.

The following theorem relates internodon theory and inspecies.
For the sake of clarity and length, this theorem is stated with 
extreme simplifying assumptions; in future
work we will generalize it to make it more applicable to the real world.

\columnbreak

\vspace{.5in}
\begin{center}
\includegraphics[scale=.7]{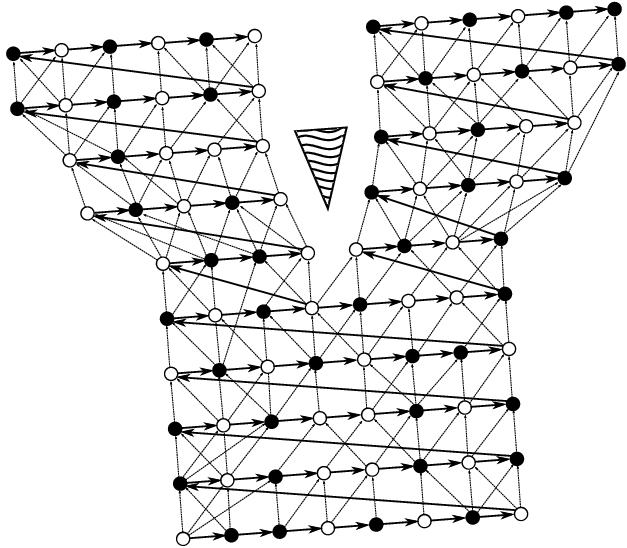}
\end{center}
\vspace{-10pt}
\begin{quote}
Figure 7: Undirparental relations in a copy of Hennig's Figure 4 population, tilted to induce unique birthdates.
We assume the population continues beyond what is shown, in particular that minor cleavages (not indicated by
triangles) are temporary.
\end{quote}
\vspace{-.9in}
\end{multicols}

\begin{theorem}
\label{internodonclimax}
Assume that no two individuals are born simultaneously, and that every individual
has at least one descendant which has more than one parent.
Then every inspecies (type II) is a union of internodons.
\end{theorem}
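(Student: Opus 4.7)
The plan is first to pin down the shape of an inspecies (type II) under unique birthdates, then to identify internodons with the linear segments of the undirparental forest $G'$. By Lemma~\ref{newGfromold} and the preceding proposition, $G'$ is a forest satisfying A1--A4, and an inspecies (type II) $C$ is a minimal infinite undirancestrally closed set of vertices. First I would show that $C$ must be an infinite downward ray $v_0, v_1, v_2, \ldots$ of $G'$ starting from a root $v_0$, with each $v_{i+1}$ an undirchild of $v_i$. Indeed, any $v \in C$ has only finitely many undirancestors by A3, so tracing upward reaches a root of $G'$ which lies in $C$ by undirancestral closure; moreover, no $v \in C$ can be ``childless in $C$,'' for then deleting $v$ would preserve both infiniteness and undirancestral closure (in a tree a vertex with no undirchildren in $C$ is an undirancestor of nothing in $C \setminus \{v\}$), contradicting minimality. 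Iterating from a root produces such a ray, and since the ray is itself undirancestrally closed and infinite, minimality forces it to equal $C$.

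The second step is to identify, under the descendant-with-multiple-parents hypothesis, the branching vertices of $G'$ (those with $\geq 2$ undirchildren) with the permanent splits of $G$, so that the maximal non-branching sub-paths of $G'$ correspond exactly to internodons of $G$. In one direction, if a split in $G$ is non-permanent, it eventually re-merges at some individual with multiple parents, and the resulting undirected path avoiding older vertices collapses what looked like two distinct undirchildren into a single undirparent line, so the split contributes no branching to $G'$. Conversely, a branching vertex $v_i$ of $G'$ with distinct undirchildren $w_1, w_2$ cannot have its $w_1$- and $w_2$-lineages re-merge: a merger at some $w$ with $v_i$-paths reaching both sides would place $w$ on a shorter undirected path from $v_i$ and thus disqualify one of $w_1, w_2$ as an undirparent, violating their definition. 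Hence $v_i$ represents a permanent split of $G$.

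Granting this correspondence, I would partition the ray $C$ at its branching vertices into maximal non-branching sub-paths; each such sub-path is an internodon by the second step, and therefore $C$ is a union of internodons as desired. I expect the main obstacle to lie in the second paragraph: formally matching the Kornet et al.\ internodon and permanent-split definitions to branching structure in $G'$ requires careful undirancestral bookkeeping, and the descendant-with-multiple-parents hypothesis must be pressed into service precisely to kill off the pathological case where one side of a split simply goes extinct rather than rejoining, which would otherwise produce an internodon boundary not visible to $G'$ and defeat the correspondence.
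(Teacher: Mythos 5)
Your first step is sound and agrees with a remark already made in the paper: under unique birthdates an inspecies (type~II) is an infinite branch of the forest $G'$, and your minimality argument for why it must be a single root-to-infinity ray is correct. The genuine gap is that your entire second step --- the identification of internodons with the maximal non-branching sub-paths of $G'$ --- is exactly the content of the theorem, and you never prove it; you argue it only through the informal gloss ``internodon boundaries are permanent splits'' and never touch the actual definition of the $\mathbf{INT}$ relation from Kornet et al.\ (1995), namely that $u\mathbf{INT}v$ (for $u$ older) requires $u(\mathbf{PC}_{\geq u})v$ together with $\mathbb{DYN}(u)\cap \geqq(r)=\mathbb{DYN}(r)$ for \emph{every} intermediate $r$ reachable from $u$. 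Verifying that this condition fails precisely at the branching vertices of $G'$ and nowhere else is a delicate bookkeeping exercise with the $\mathbb{DYN}$ sets (including conventions about whether $r\in\mathbb{DYN}(r)$), and it is also more than the theorem needs: to show an inspecies is a \emph{union} of internodons you only need that every internodon meeting it is contained in it, not a full structural description of internodons. You flag this step yourself as ``the main obstacle,'' which is an accurate self-diagnosis: as written the proposal defers the theorem rather than proving it. Your placement of the multiple-parent-descendant ($\mathbf{SD}$) hypothesis is also off: its role is not to rule out extinct side-branches but to collapse $\mathbf{INTSD}$ to $\mathbf{INT}$ and $\mathbb{GDYN}$ to $\mathbb{DYN}$, i.e.\ to license the simplified internodon definition in the first place.

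The paper's proof avoids any characterization of internodons. It isolates a sufficient condition (Proposition~\ref{sufficientcond}): if $S$ is undirancestrally closed and every $u\in S$ has members of $S$ with $u(\mathbf{PC}_{\geq u})u'$ born arbitrarily late, then $S$ is a union of internodons. The verification works directly from the $\mathbf{INT}$ definition in two cases --- if $v\mathbf{INT}u$ with $v$ older, then $v$ is an undirancestor of $u$ and closure finishes; if $v$ is younger, instantiating the universal clause at $r=v$ yields $\mathbb{DYN}(u)\cap\geqq(v)=\mathbb{DYN}(v)$, and a late-born $u'\in S$ lands in $\mathbb{DYN}(v)$, making $v$ an undirancestor of $u'$ and hence a member of $S$. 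The theorem then follows because the second condition is exactly Proposition~\ref{prolific} applied to the graph $G'$ of Lemma~\ref{newGfromold}. If you want to salvage your route, the honest comparison is that your approach would, if completed, prove a stronger structural statement about where internodon boundaries sit in $G'$, but completing it requires precisely the kind of engagement with the formal $\mathbf{INT}$ definition that the paper's two-case argument supplies and your sketch omits.
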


In Kornet et al.~(1995), the property (above) that an individual has at least one descendant with multiple parents is 
called the $\mathbf{SD}$-property (for Sexual Descendant).

The proof of the theorem is given in Appendix A.  In future work when we 
strengthen this theorem, the strengthened version
will involve the Knight-Darwin Law,
tying together three central themes of this paper.

\section{Discussion}

What began with an attempt to generalize a graph of Dress et al.~by 
considering future organisms as well as present and past, led us to 
consider some existing usages of the future biosphere in systematic 
biology.  We noticed that at least one of these (the Knight-Darwin Law) 
and probably more (Hennig's internodal species concept) hinge upon 
infinitary assumptions, leading us to consider \emph{infinite graphs in 
systematic biology}.  Laboring under the (simplifying) assumption that the 
biosphere will be infinite, we found some natural cluster notions with 
some idealized properties of species.

Defined entirely by ancestral
relations, infinitary species share an \emph{intrinsic} quality with other 
cladistical or non-morphological notions.
And yet, Proposition~\ref{trichotomy} shows despite this, we still enjoy
certain \emph{recognizable} qualities associated with morphological 
species notions.
We consider this an application to the species problem because it 
demonstrates that these seemingly unreconcilable approaches can actually 
meet 
when viewed at sufficient scale.
Methods of Dress et 
al.~suggest (in light of Proposition~\ref{nesting})
that infinitary species play the role of leaves at the distant future 
extremities of the tree of life.

The main flaws of infinitary genera and species are as follows.  First, 
they are completely unobservable:
to establish even one organism's membership in one particular infinitary 
genus, it is necessary to have knowledge
of the infinitely distant future.  Second, not all organisms are 
guaranteed membership in any infinitary species.
A childless organism is
excluded from being in any infinitary species (e.g., by 
Proposition~\ref{structureprop});
this is reminiscent of the cladist's disregard for the sterile mule.
Even if an organism has infinitely many descendants, it is still no 
guarantee of belonging to an infinitary
species (see Section 6.1).
Third, when an organism does have a species, it might not be unique: an 
individual may belong to multiple
infinitary species at once (but the magnitude of this flaw is
limited by Proposition~\ref{nesting}).  These crucial flaws re-emphasize
that while normal species are like nodes on the tree of life, infinitary 
species are more like ``leaves at infinity.''

We compared and contrasted our new infinitary species notion with the 
internodal species concept of Hennig, as formalized by Kornet et al.
To contrast the two, we compared how they performed on a particularly 
extreme population-- a technique which we referred to as using a 
\emph{benchmark population}, and which we hope might serve as a more 
general application of infinitary graph theory to the species problem,
breaking it into more manageable species subproblems.

We've exhibited some new theorems (among 
them, a dualization of
the Knight-Darwin Law) which we find interesting in their own right.
These theorems,
like those of Kornet et al.~(1995) before us,
are of a less computational nature than a lot of mathematical theorems 
which have come from biology.
We hope they may contribute to the two-way bridge between biology and the 
theoretical, philosophical,
combinatorial
side of
mathematics.

\section{Acknowledgments}

We thank Timothy J.~Carlson, Bora Bosna, Mike Fenwick,
and especially the referees
for much productive feedback and discussion.

\appendix
\section{Proof of Theorem~\ref{internodonclimax}}

In this appendix we will prove Theorem~\ref{internodonclimax}
(hereafter we assume its hypotheses).
First we review the definitions in Kornet et 
al.~(1995).
As we state them they are not equivalent; they are altered to
fit the hypotheses of Theorem~\ref{internodonclimax}, allowing us to 
simplify them\footnote{To be precise, we are assuming that every 
individual has the $\mathbf{SD}$ property, which
equates the equivalence relations $\mathbf{INT}$ and $\mathbf{INTSD}$,
as well as the sets $\mathbb{DYN}$ and $\mathbb{GDYN}$,
of the 1995 paper.}.

\begin{definition}
For any $u\in G$, write $\mathbb{DYN}(u)$ for the set
$\{x\in G\,:\,u(\mathbf{PC}_{\geq u})x\}$
and write $\geqq(u)$ for $\{x\in G\,:\,t(x)>t(u)\}$.
If $u,v\in G$ and $t(u)<t(v)$, define
\[
u\mathbf{INT}v :\Leftrightarrow
u(\mathbf{PC}_{\geq u})v
\wedge
\forall r[\{u(\mathbf{PC}_{\geq u})r \wedge \left(\, t(r)\leq 
t(v)\,\right)\}
\Rightarrow (\mathbb{DYN}(u)\cap \geqq(r) = \mathbb{DYN}(r))].
\]
If $t(v)<t(u)$ then $u\mathbf{INT}v:\Leftrightarrow v\mathbf{INT}u$,
and if $t(v)=t(u)$ then $u=v$ by our assumption that no individuals are 
born simultaneously, and we define $u\mathbf{INT}u$.
\end{definition}

The main theorem of Kornet et al.~(1995) implies $\mathbf{INT}$ is an 
equivalence relation.

\begin{definition}
An \emph{internodon} is an $\mathbf{INT}$-equivalence class.
\end{definition}

We could now dive directly into the proof of 
Theorem~\ref{internodonclimax}
but we prefer to factor out a sufficient condition which might be 
useful in the future for proving that other species notions decompose into 
internodons (this condition, too, will be generalized in future work).

\begin{proposition}
\label{sufficientcond}
(Sufficient conditions for internodons-decomposition)
(Assuming the hypotheses of Theorem~\ref{internodonclimax}.)
Let $S\subseteq V$.  If $S$ is undirancestrally closed,
and for each $u\in S$ and $t\in \mathbb{R}$ there is some $u'\in S$
born after $t$ with $u(\mathbf{PC}_{\geq u})u'$; then $S$ is a union of 
internodons.
\end{proposition}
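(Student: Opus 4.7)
The plan is to show that every $\mathbf{INT}$-equivalence class is either entirely inside $S$ or entirely disjoint from $S$; equivalently, whenever $u \in S$ and $u\mathbf{INT}v$, we must have $v \in S$. Since the hypotheses guarantee no two individuals share a birthdate, I would split into three cases according to whether $t(v)$ is less than, equal to, or greater than $t(u)$.

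The cases $t(v) = t(u)$ and $t(v) < t(u)$ are the easy ones. The former forces $v = u \in S$ by the no-ties hypothesis. In the latter, unpacking the definition, $u\mathbf{INT}v$ reduces to $v\mathbf{INT}u$, whose primary clause $v(\mathbf{PC}_{\geq v})u$ combined with $t(v) < t(u)$ says precisely that $v$ is an undirancestor of $u$; then undirancestral closure of $S$ delivers $v \in S$.

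The case $t(v) > t(u)$ is the heart of the argument and the main obstacle: here $v$ is an undirdescendant of $u$, so neither undirancestral closure of $S$ applied to $u$ nor any direct structural argument about $v$ lying ``above'' something in $S$ is immediately available. The key move is to invoke the second hypothesis on $S$ to produce an element $u' \in S$ with $u(\mathbf{PC}_{\geq u})u'$ and $t(u') > t(v)$, and then exploit the $\mathbf{INT}$ condition to transfer the undirdescendant relationship from $u$ to $v$. Specifically, applying the universally quantified clause in the definition of $u\mathbf{INT}v$ to the choice $r = v$ (which is legitimate because $u(\mathbf{PC}_{\geq u})v$ and $t(v) \leq t(v)$) yields the equality $\mathbb{DYN}(u) \cap \geqq(v) = \mathbb{DYN}(v)$. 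Since $u' \in \mathbb{DYN}(u)$ and $t(u') > t(v)$, this equality places $u'$ in $\mathbb{DYN}(v)$, i.e., $v(\mathbf{PC}_{\geq v})u'$. Combined with $t(v) < t(u')$, we conclude $v$ is an undirancestor of $u'$, and undirancestral closure applied to $u' \in S$ finally gives $v \in S$.

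The main conceptual difficulty, as indicated, is that the hypothesis about $S$ containing arbitrarily late undirdescendants of each of its members is a one-sided, ``downward in time'' condition, whereas we need to certify that $v$, which may sit between $u$ and the chosen witness $u'$ in birthdate, also lies in $S$; the $\mathbf{INT}$-condition is exactly the lever that converts the presence of the late witness $u'$ into an undirancestry relation at the intermediate time $t(v)$. I expect no further subtlety beyond this; the rest is bookkeeping with the definitions of $\mathbb{DYN}(\cdot)$, $\geqq(\cdot)$, and $(\mathbf{PC}_{\geq \cdot})$ given in the excerpt.
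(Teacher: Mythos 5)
Your proposal is correct and follows essentially the same route as the paper's own proof: reduce to showing $u\mathbf{INT}v$ with $u\in S$ forces $v\in S$, handle the earlier-born case by undirancestral closure, and in the later-born case instantiate the universal clause of $\mathbf{INT}$ at $r=v$ to transfer the late witness $u'$ into $\mathbb{DYN}(v)$ and again apply undirancestral closure. The only cosmetic difference is that you treat $t(u)=t(v)$ as an explicit third case, which the paper absorbs into the definition of $\mathbf{INT}$.
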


\begin{proof}
It suffices to let $u\in S$, $v\in G$, and show that if $u\mathbf{INT}v$
then $v\in S$.

Case 1: $v$ is born before $u$.
Since $v\mathbf{INT}u$, in particular $v(\mathbf{PC}_{\geq v})u$.
Thus $v$ is an undirancestor of $u$, putting $v\in S$ by
undirancestral closure.

Case 2: $v$ is born after $u$.
Since $u\mathbf{INT}v$, by definition of $\mathbf{INT}$ we have
$u(\mathbf{PC}_{\geq u})v$
and ($*$) for every $r$ such that $u(\mathbf{PC}_{\geq u})r$
and $t(r)\leq t(v)$, we have
$\mathbb{DYN}(u)\cap \geqq(r)=\mathbb{DYN}(r)$.

By the proposition's hypothesis, there is some $u'\in S$, born after $v$,
such that $u(\mathbf{PC}_{\geq u})u'$.
Letting $r=v$ in ($*$), we see (since 
$u(\mathbf{PC}_{\geq u})v$ and $t(v)\leq t(v)$)
that
$\mathbb{DYN}(u)\cap\geqq(v)=\mathbb{DYN}(v)$.
Since $u'\in \mathbb{DYN}(u)\cap\geqq(v)$, this shows
$u'\in\mathbb{DYN}(v)$, that is, $v(\mathbf{PC}_{\geq v})u'$.
Thus $v$ is an undirancestor of $u'$,
so $v\in S$ by undirancestral closure of $S$.
\end{proof}

\begin{proof}[Proof of Theorem~\ref{internodonclimax}]
Let $S$ be an inspecies (type II).
The first hypothesis of Proposition~\ref{sufficientcond} holds since $S$ is undirancestrally closed
by definition.
The second hypothesis of Proposition~\ref{sufficientcond} is immediate by Proposition~\ref{prolific}.
\end{proof}

The reader will have noticed that despite imposing such onerous 
hypotheses on Theorem~\ref{internodonclimax}, we barely seem to have actually used those hypotheses.
Their key use was in simplifying the definition of internodons.

\end{document}